\documentclass[10pt,journal,compsoc]{IEEEtran}

\newtheorem{definition}{Definition}
\newtheorem{strategy}{Strategy} 
\newtheorem{theorem}{Theorem}
\newenvironment{proof}{\begin{IEEEproof}}{\end{IEEEproof}}

\usepackage{multirow}
\usepackage{amsmath}

\usepackage{array}
\newtheorem{upper bound}{Upper bound}

\usepackage{url}
\usepackage{soul}
\usepackage{color, xcolor} 
\usepackage{wrapfig}

\ifCLASSINFOpdf
   \usepackage[pdftex]{graphicx}
\else
\fi

\usepackage[linesnumbered,ruled]{algorithm2e} 

\usepackage[caption=false,font=footnotesize,labelfont=sf,textfont=sf]{subfig}
  \usepackage[caption=false,font=footnotesize]{subfig}
\usepackage{enumitem}

\newcommand\MYhyperrefoptions{bookmarks=true,bookmarksnumbered=true,
	pdfpagemode={UseOutlines},plainpages=false,pdfpagelabels=true,
	colorlinks=true,linkcolor={blue},citecolor={blue},urlcolor={blue},
	pdftitle={High-utility Sequential Rule Mining Utilizing Segmentation Guided by Confidence},
	pdfsubject={Typesetting},
	pdfauthor={ChunKai Zhang},
	pdfkeywords={Data mining, sequential rule, utility mining, segmentation, confidence}}
\ifCLASSINFOpdf
\usepackage[\MYhyperrefoptions,pdftex]{hyperref}
\else
\usepackage[\MYhyperrefoptions,breaklinks=true,urlcolor=black,dvips]{hyperref}

\usepackage{breakurl}
\fi

\begin{document}

\title{High-utility Sequential Rule Mining Utilizing Segmentation Guided by Confidence}

\author{Chunkai Zhang, Jiarui Deng, Maohua Lyu, Wensheng Gan*, and Philip S. Yu,~\IEEEmembership{Life Fellow,~IEEE}
	
	\IEEEcompsocitemizethanks{\IEEEcompsocthanksitem Chunkai Zhang, Jiarui Deng, and Maohua Lyu are with the School of Computer Science and Technology, Harbin Institute of Technology (Shenzhen), Shenzhen 518055, China. (E-mail: ckzhang@hit.edu.cn, 23s151067@stu.hit.edu.cn, 22s151144@stu.hit.edu.cn)

	\IEEEcompsocthanksitem Wensheng Gan is with the College of Cyber Security, Jinan University, Guangzhou 510632, China. (E-mail: wsgan001@gmail.com) 
	
	\IEEEcompsocthanksitem Philip S. Yu is with the Department of Computer Science, University of Illinois at Chicago, IL, USA. (E-mail: psyu@uic.edu)
	
    \IEEEcompsocthanksitem Corresponding author: Wensheng Gan}
}

\IEEEtitleabstractindextext{%

\begin{abstract}
    Within the domain of data mining, one critical objective is the discovery of sequential rules with high utility. The goal is to discover sequential rules that exhibit both high utility and strong confidence, which are valuable in real-world applications. However, existing high-utility sequential rule mining algorithms suffer from redundant utility computations, as different rules may consist of the same sequence of items. When these items can form multiple distinct rules, additional utility calculations are required. To address this issue, this study proposes a sequential rule mining algorithm that utilizes segmentation guided by confidence (RSC), which employs confidence-guided segmentation to reduce redundant utility computation. It adopts a method that precomputes the confidence of segmented rules by leveraging the support of candidate subsequences in advance. Once the segmentation point is determined, all rules with different antecedents and consequents are generated simultaneously. RSC uses a utility-linked table to accelerate candidate sequence generation and introduces a stricter utility upper bound, called the reduced remaining utility of a sequence, to address sequences with duplicate items. Finally, the proposed RSC method was evaluated on multiple datasets, and the results demonstrate improvements over state-of-the-art approaches.
\end{abstract}

\begin{IEEEkeywords}
  Data mining, sequential rule, utility mining, segmentation, confidence.
\end{IEEEkeywords}}

\maketitle

\IEEEdisplaynontitleabstractindextext

\IEEEpeerreviewmaketitle

\IEEEraisesectionheading{
\section{Introduction}}
\label{sec:introduction}

Sequential data, often referred to as ordered data, consists of observations organized in a fixed order, such as text, DNA sequences, user actions, and log data. Sequential rule mining (SRM) \cite{ 10909343,li2023mcor} can capture predictive or causal relationships between ordered events or items by discovering rules from sequential data. SRM is widely used in the real world, including symptom prediction and prognosis analysis in personalized cancer treatment \cite{floricel2024roses}, cross-domain book recommendation \cite{anwar2022cd},  personalized sentiment analysis for users \cite{skenduli2021mining}, user category prediction \cite{matrouk2023mapreduce}, detection of potential violations in Dockerfiles \cite{2023DRIVE} and discovering treatment patterns in Traditional Chinese Medicine \cite{2024CTM1,2024CTM2}. High-utility sequential rule mining (HUSRM) \cite{zida2015efficient}  can be viewed as an advancement of SRM by integrating utility measures, such as profit, cost, or benefit, into rule evaluation. This enables the identification of rules that are valuable in real-world scenarios. For example, in recommendation tasks, HUSRM incorporates user rating data as utility to discover which movies users prefer and to recommend movies based on individual rating behaviors \cite{jiang2024novel}. In bioinformatics, it uses levels of gene expression as a utility to identify genes that significantly influence phenotypic traits, thereby facilitating human gene expression analysis \cite{SEGURADELGADO2022116411}. In the domain of cybersecurity, the severity levels of log records are used as a utility to identify which log entries are associated with abnormal behaviors, supporting effective anomaly detection \cite{2023Anomaly}. In healthcare, the hazard ratios of medical examination features can be used as utility to extract rules related to cardiovascular diseases, aiding in prevention and early diagnosis \cite{IQBAL2022102347}. In e-commerce, product profits are treated as utility to mine personalized recommendation rules that both reflect consumer preferences and maximize merchant revenue \cite{zida2015efficient}.

The utility of the rule is an important metric for evaluating its value. Therefore, the utility needs to be calculated for each generated rule across the candidate space, making the process computationally intensive and complex. Since utility does not possess the downward closure property, it cannot be directly applicable to prune the search space. To overcome this limitation, the notion of an upper bound on utility has been introduced, which satisfies the closure property and allows effective pruning of the candidate space. \cite{zida2015efficient} first defined the HUSRM task and introduced an upper bound for the rule's utility, called the estimated sequence utility (SEU). To accelerate rule generation, they proposed the utility table data structure and designed a bitmap to compute rule occurrence counts efficiently. Huang et al. \cite{huang2023us} addressed the issue of loose utility upper bounds and low pruning efficiency by proposing new bounds, including Left Expansion Estimated Utility (LEEU) along with its right-expansion counterparts, to enhance pruning performance. TotalSR \cite{zhang2024totally} is a totally ordered mining approach, which can discover rules from sequences where the order of events is strictly emphasized. It further tightened the utility upper bound by proposing a new upper bound and designed an auxiliary antecedent record table to track rule confidence. USER \cite{lin2023user} targeted the limitation of existing methods that overlook repetitive items by partitioning the utility computation during the expansion process, thereby enabling effective mining on databases containing repetitive items. In addition, some studies are focusing on other variants of the HUSRM, such as mining high average-utility rules \cite{SEGURADELGADO2022116411} and discovering rare high-utility sequential rules \cite{IQBAL2022102347}.

Although the above HUSRM algorithms have achieved significant improvements in reducing the search space, they have made little progress from the perspective of rule generation. All of them utilize a left or right expansion (LRE) strategy to generate rules, extending the antecedent or consequent of existing rules from either the left or the right at the expansion position. The LRE strategy was first proposed by Zida et al. \cite{zida2015efficient}, who observed that rules could be repeatedly derived through various combinations of left and right extensions, and this strategy restricts the expansion direction to avoid redundancy of rule generation and ensure the completeness of discovered rules. However, we observe the situation that this rule generation strategy leads to redundant utility computations, as multiple distinct rules may be composed of the same set of items forming an identical sequence. For example, for  the existing rules $\{a$ $\rightarrow$ $d\}$ and $\{a$ $\rightarrow$ $c\}$ in the sequence $<$$a$, $b$, $c$, $d$$>$, the extended rules are $\{abc$ $\rightarrow$ $d\}$ and $\{ab$ $\rightarrow$ $cd\}$, respectively. The utility computation process for both rules is essentially the same, since the items $a$, $b$, $c$, and $d$ involved in these rules are identical, leading to redundant utility computations. The issue of redundant utility computation becomes more pronounced when sequences contain repetitive items \cite{lin2023user}, as the utility must be calculated multiple times to determine the maximum utility of the expanded rule. For instance, consider the sequence $<$$a(1)$, $b(3)$, $c(4)$, $d(2)$, $a(1)$, $a(2)$, $b(1)$, $c(3)$, $d(3)$$>$, where the number in parentheses indicates the utility of each item. The maximum utility of the item $a$ is 2 (at position 6), and the maximum utility of items 
$ab$ is 4 (with $a$ at position 1 and $b$ at position 2). After expansion, the selected position of $a$ has changed, meaning the utility of all possible occurrences of $ab$ must be evaluated during expansion. Take the rules $\{a$ $\rightarrow$ $d\}$ and $\{a$ $\rightarrow$ $c\}$ as examples. After expansion, they generate $\{ac$ $\rightarrow$ $d\}$ and $\{a$ $\rightarrow$ $cd\}$, respectively. Since the items $<$$a$, $d$$>$ and $<$$a$, $c$$>$ each appear four times in sequence, four utility computations are required for each rule before expansion. As a result, the utility of $<$$a$, $c$,  $d$$>$ is computed one extra time. In this example, utility values are recalculated for all four occurrences of this sequence after expansion.

Since rule generation methods based on the LRE algorithm initiate mining from specific initial rules, the candidate rules generated by each expansion are different, so it is necessary to calculate the utility of candidate rules every time. The redundant utility computations incurred in this process cannot be avoided by methods such as simply storing the utilities of the discovered rules; therefore, new methods are needed to address the problem of redundant utility computations. A novel high-utility sequential rule generation method is proposed that derives all corresponding rules from one sequence. This approach requires computing the utility of the sequence only once, allowing the obtained utility to be shared across all derived rules. As a result, redundant utility computations are eliminated, thereby improving overall efficiency. However, this idea faces three challenges. First, although utility upper bounds are capable of efficiently producing high-utility sequences, these sequences do not necessarily yield high-utility sequential rules. Second, how to efficiently derive rules from existing sequences remains an open question. Third, it is uncertain whether the new rule generation method could ensure the completeness of the discovered rules. In this study, we present a sequential Rule mining algorithm that utilizes Segmentation guided by Confidence, named  RSC, which addresses the redundant utility computation caused by LRE. It proposes a segmentation approach that precomputes the confidence of the segmented rules by using the support of candidate subsequences. Once the segmentation point is determined, all rules with different antecedents and consequents are generated at once. Since all rules are generated from the same sequence with identical utility, redundant computations are effectively avoided. As a segmentation-based method, RSC designs a data structure called the utility-linked table, which uses pointers to index the projected database, thereby accelerating candidate generation. RSC also proposes a fast segmentation strategy to accelerate rule generation. To handle sequences with repetitive items, RSC introduces a tighter utility upper bound called Reduced Remaining Utility (RRU), together with associated pruning techniques designed to narrow the search space. Through theoretical analysis, the segmentation-based method is shown to generate complete rules. This paper makes the following key contributions:

\begin{itemize}
    \item To address the issue of redundant utility computation caused by LRE-based rule generation, we design a sequential rule mining algorithm that utilizes segmentation guided by confidence. 
    
    \item We propose a new structure, the utility-linked table, to project the database and enable efficient high-utility sequence generation. A new utility upper bound called reduced remaining utility, along with corresponding pruning strategies, is introduced to reduce the search space.
    
    \item Evaluations conducted on both real-world and synthetic datasets demonstrate that the proposed algorithm reduces runtime by approximately 50$\%$, showing its high efficiency.
\end{itemize}

This paper is structured as follows. HUSRM is concisely reviewed in Section \ref{sec:relatedwork}. Section \ref{sec:preliminaries} presents the basic concepts and problem definition. Section \ref{sec:method} details the RSC algorithm and its procedures. Section \ref{sec:experiments} details the experimental setup and performance evaluation using both real-world and synthetic datasets. Finally, Section \ref{sec:conclusion} summarizes the main findings of this study and outlines future work.

\section{Related work}  \label{sec:relatedwork}

This section reviews existing research on SRM, high-utility sequential pattern mining (HUSPM), and HUSRM.

\subsection{Sequential rule mining }

Combinations of items that appear more frequently than a given support threshold are typically identified through pattern mining techniques \cite{380415,SUN2024112398,LAN2025104077,ONP}, where patterns with higher frequencies are typically considered more valuable. However, it primarily focuses on identifying frequent combinations of items and lacks the capability to predict the likelihood of subsequent items based on current observations. Therefore, sequential rule mining (SRM) was proposed \cite{zaki2001spade,10008085}, which aims to mine all rules expressed as $X$ $\rightarrow$ $Y$, in which $X$ serves as the antecedent and $Y$ as the consequent. It introduces the concept of confidence and focuses more on discovering associations between patterns. Formally, confidence is calculated as the rule’s support over the support of the antecedent. The confidence metric is used to represent the likelihood that one pattern will appear given the occurrence of another, making it suitable for predictive tasks. SRM is divided into two categories: totally ordered \cite{lo2009non,pham2014efficient,li2023mcor} and partially ordered \cite{fournier2011rulegrowth,fournier2014erminer,fournier2015mining}. A totally ordered rule requires that the items in $X$ and $Y$ follow the exact order found in the original database, while a partially ordered rule only requires that $X$ and $Y$ are relatively ordered, which means that there are no specific requirements for the order of items within $X$ and $Y$. CMDeo \cite{fournier2012cmrules} first discovers all association rules and then uses them to produce sequential rules. CMRules \cite{fournier2012cmrules} first introduces the concept of LRE. It expands association rules to the left or right to obtain sequential rules. Subsequently, RuleGrowth \cite{fournier2011rulegrowth} and TRuleGrowth \cite{fournier2015mining} were proposed. Both of them use the LRE method to directly obtain sequential rules without first generating association rules. They are also partially ordered because the authors argue that some totally ordered rules may be overly specific and thus rarely occur in practice \cite{fournier2015mining}. In ERMiner \cite{fournier2014erminer}, rule generation is expedited by organizing rules sharing identical antecedent or consequent, along with the use of a sparse count matrix to facilitate pruning. In addition, numerous studies have focused on variants of SRM. TaSRM \cite{10909343} focuses on mining user-specified and relevant sequential rules for specific targets. To mine maximal co-occurrence non-overlapping sequential rules, MCoR-Miner \cite{li2023mcor} employs frequent item-based techniques and a binomial enumeration tree approach. CoUSR \cite{10606108} integrated the concept of correlation into HUSRM to discover rules with strong correlations.

\subsection{High-utility sequential pattern mining}

Developed as an extension of frequent pattern mining \cite{agrawal1996fast}, high-utility pattern mining (HUPM) \cite{liu2012direct,li2025hupsp,chen2024towards} was proposed. Rather than focusing solely on item frequency, HUPM emphasizes the actual utility of patterns, which reflects user interests, and its value is often assessed through metrics such as profit, cost, or benefit. In this framework, each item is linked to a corresponding utility score, and the goal is to identify all patterns with a total utility surpassing a predefined minimum threshold. To reduce the search space, an upper bound sequence weighted utility (SWU) was proposed. SWU satisfies the downward closure property and is the first customized upper bound proposed for utility-based search space pruning. Subsequently, Yin et al. \cite{yin2012uspan} were the first to define HUSPM and proposed the USpan algorithm. By utilizing a utility matrix structure, USpan can store the order of item occurrences in sequences and introduce sequence projected utilization (SPU) for pruning. However, both SWU and SPU produce overly loose upper bounds, leading to significant gaps between their estimates and the actual utilities of patterns. Moreover, SPU is not entirely accurate, as it may mistakenly prune some actual high-utility sequential patterns. To shrink the search space and make the algorithm more efficient, \cite{alkan2015crom} proposed a data structure and pruning strategy based on Cumulated Rest of Match (CRoM). They also used the utility of parent patterns to calculate the child patterns' utility. HUS-Span \cite{wang2016efficiently} proposed two new upper bounds on utility—prefix extension utility and reduced sequence utility. Additionally, ProUM  \cite{gan2020proum} introduced a projection strategy and a utility array to record utility values. It reduces memory usage by projecting the database. HUSP-ULL \cite{gan2020fast} further proposed the UL-list structure, which efficiently creates projected databases based on prefix sequences. HUSP-SP \cite{zhang2023husp} introduces a more compact data structure called sequence-array to reduce memory usage and introduces a more restrictive utility bound called tighter reduced sequence utility.

In addition to improving existing algorithms, there are also many studies on other variants of HUSPM algorithms. TUSQ \cite{zhang2022tusq} focuses on mining patterns that meet certain requirements instead of considering all items.  Zhang et al. \cite{zhang2023mining} considered that sequences may contain items with negative utility.  U-HPAUSM \cite{duong2025u} focuses on the issue that general mining may return long patterns. It discovers high average-utility patterns within uncertain quantitative databases.

\subsection{High-utility sequential rule mining}

As an extension of SRM, high-utility sequential rule mining (HUSRM)  integrates the notion of utility. \cite{zida2015efficient} was the earliest to propose HUSRM, which mines partially ordered rules. They used the LRE strategy to obtain rules and used bitmaps to calculate the support of $X$. US-Rule \cite{huang2023us} considered the characteristics of LRE and proposed four new utility upper bounds, which greatly shrink the search space. In addition, \cite{lin2023user} proposed the USER algorithm, which improves existing methods to support mining on datasets containing repetitive items. TotalSR \cite{zhang2024totally} proposed a totally ordered HUSRM algorithm. As noted in TotalSR, totally ordered algorithms cannot use bitmap structures to compute the support of $X$, unlike partially ordered algorithms. Moreover, considering sequential order increases the search space, placing greater performance demands on utility upper bounds and pruning strategies. Therefore, TotalSR used the auxiliary antecedent record table to record data information. It also introduced two new utility upper bounds and employed confidence for pruning to improve algorithm efficiency. As noted in TotalSR, partially ordered mining may filter out infrequent but valuable rules and generate false rules \cite{zhang2024totally}. This is especially evident in data that places great emphasis on the order of events, such as stock trading data and clickstream data. 

Until now, numerous applications and algorithms have explored high-utility rule mining. However, these methods all adopt the LRE strategy for rule generation, causing redundant utility computations. Thus, this paper proposes the RSC algorithm, which avoids redundant utility calculations for rules with identical item sequences and employs the segmentation strategy to generate all sequence rules at once, improving algorithm efficiency.

\section{Background}   \label{sec:preliminaries}
This section explains key definitions related to this paper and then states the HUSRM problem.

\subsection{Notations and definitions}

\begin{definition}[Sequence and sequence database \cite{zaki2001spade}]
  \rm  Consider a set $\Sigma$ $=$ \{$i_1$, $i_2$, $\cdots$, $i_m$\} consisting of distinct items. A sequence $s$ is an ordered collection of elements, represented as $s$ $=$ $<$$e_1$, $e_2$, $\cdots$, $e_l$$>$, where $e_k$ $\in$ $\Sigma$ $(1$ $\le$ $k$ $\le$ $l)$. A sequence database $\mathcal{D}$ is defined as a collection of sequences, denoted by $<$$s_1$, $s_2$, $\cdots$, $s_n$$>$, where $s_i$ $(1$ $\le$ $i$ $\le$ $n)$. Note that sequences do not impose any restrictions on the occurrence count of individual items in a sequence; therefore, multiple identical items may occur in a sequence $s$. Typically, each distinct item within the database is assigned a positive value known as the external utility, whereas the item’s presence within a sequence is associated with an internal utility. The true utility of an item within a sequence is calculated by multiplying its external utility with its internal utility, following the HUSRM framework. In this study, this step is omitted, and the actual utility of the $e_k$ in a particular $s$ is represented as $u(e_k,s)$.
\end{definition}

 It is assumed that no events occur simultaneously within a sequence, meaning that itemsets are not present. This assumption is consistent with the majority of real-world sequence data. Table \ref{database} shows an example sequence database with five sequences.

\begin{table}[h]
    \centering
    \caption{Sequence database}
    \label{database}
    \begin{tabular}{c|c}
    \hline
    \textbf{SID} & \textbf{Sequences} \\
    \hline
    $s_1$ & $\langle (a, 1), (c, 2), (c, 3) \rangle$ \\
    \hline
    $s_2$ & $\langle (b, 5), (c, 2), (e, 8), (b, 6) \rangle$ \\
    \hline
    $s_3$ & $\langle (a, 2), (c, 2), (f, 10) \rangle$ \\
    \hline
    $s_4$ & $\langle (a, 2), (a, 1), (a, 2), (b, 6), (c, 3), (a, 3)\rangle$ \\
    \hline
    $s_5$ & $\langle (d, 1), (a,1), (b, 4) \rangle$ \\
    \hline
    \end{tabular}
\end{table}

 \begin{definition}[Sequence occurrence and utility \cite{zida2015efficient}]
 \label{sequenceUtility}
  \rm  For two sequences $s^{\prime}$ $=$ $<$$e_1$, $\cdots$, $e_n$$>$ and  $s$ $=$ $<$$e_1^{\prime}$, $\cdots$, $e_{m}^{\prime}$$>$, we say that $s^{\prime}$ appears in $s$ if and only if there exist indices 1 $\le$ $j_1$ $<$ $\cdots$ $<$ $j_n$ $\le$ $m$  such that $e_1$ $=$ $e_{j_1}^{\prime}$, $\cdots$, $e_n$ $=$ $e_{j_n}^{\prime}$. Assume that sequence $s^{\prime}$ appears multiple times in sequence $s$, denoted as $O(s^{\prime},s)$ $=$ $\{$$O_1, O_2$, $\cdots$, $O_m$$\}$, where each $O_i$ denotes the index set of a distinct occurrence of $s^{\prime}$ in $s$. Furthermore, we define $seq(s^{\prime})$ $=$ $\{$$s$ $|$ $O(s^{\prime},s)$ $\neq$ $\emptyset$$\}$ to represent the sequences that contain $s^{\prime}$, and $sup(s^{\prime})$ $=$ $|seq(s^{\prime})|$ represents the support of $s^{\prime}$. We represent the utility of $s^{\prime}$ in $s$ as $u(s^{\prime},s)$ $=$ max$\{u(O_i)$ $|$ $O_i \in O(s^{\prime},s)\}$, where $u(O_i)$ $=$ $\sum_{k \in O_i} u(e_k,s)$. The overall utility of $s^{\prime}$ within $\mathcal{D}$ is represented by $u(s^{\prime})$ = $\sum_{s_i\in \mathcal{D}}u(s^{\prime},s_i)$. Note that if the sequence $s^{\prime}$ contains just one item, the above can be used to represent the properties of that specific item. Note that the position statistics in this paper are based on the total order. Therefore, the rules mined are also totally ordered \cite{zhang2024totally}.
\end{definition}

For example, the sequence $s$ $=$ $<$$b$, $c$$>$ appears in $s_2$ and $s_4$, $seq(s)$ $=$ $\{$$s_2$, $s_4$$\}$, $u(s)$ $=$ $u(s,s_2)$ $+$ $u(s,s_4)$ $=$ $(5+2)$ $+$ $(6+3)$  $=$ $7$ + $9$ = 16.

\begin{definition}[Sequential rule \cite{zaki2001spade}]
    \rm   A sequential rule (SR),  denoted as $r$ $=$ $X$ $\rightarrow$ $Y$,  represents a dependency between two sequences, with $X$ being the antecedent and $Y$ the consequent. Specifically, sequential rules do not allow repeated items and must satisfy $X$ $\cap$ $Y$ = $\emptyset$  and $X$ $\neq$ $\emptyset$, $Y$ $\neq$ $\emptyset$. An SR $r$ can be interpreted as: if the items in $X$ occur, then the items in $Y$ are likely to follow. 
\end{definition}

\begin{definition}[Sequential rule occurrence and utility \cite{zida2015efficient}]
   \rm For an SR $r$ = $\{$$e_1$, $\cdots$, $e_{k}$$\}$ $\rightarrow$ $\{$$e_{k+1}$, $\cdots$, $e_{n}$$\}$ and the sequence $s$ $=$ $<$$e_1^{\prime}$, $\cdots$, $e_{m}^{\prime}$$>$, we say that $r$ appears in $s$ if and only if the rule sequence $s_r$ $=$ $<$$e_1$, $\cdots$, $e_{k}$, $e_{k+1}$, $\cdots$, $e_{n}$$>$ appears in $s$. Suppose that rule $r$ appears multiple times within $s$, represented as $O(r,s)$ $=$ $\{$$O_1, O_2$, $\cdots$, $O_m$$\}$, where $O_i$ represents the set of indices corresponding to occurrences of $r$ in $s$. $r$'s utility in sequence $s$ is indicated as $u(r,s)$ $=$ max$\{u(O_i)$ $|$ $O_i \in O(r,s)\}$, where $u(O_i)$ $=$ $\sum_{n \in O_i} u(e_n,s)$. The sum utility of $r$ in $\mathcal{D}$ is represented as $u(r)$ = $\sum_{s_i\in \mathcal{D}}u(r,s_i)$.
\end{definition}

Take sequences $s_1$, $s_3$, $s_4$ and $r$ $=$ $\{a \rightarrow c\}$ as an example. The $r$ appears in $s_1$, $s_3$ and $s_4$. Specifically, $r$ appears three times in $s_4$, denoted as $O(r,s_4)$ $=$ $\{$$O_1, O_2, O_3\}$, where $O_1 = \{1,5\}$, $O_2 = \{2,5\}$, $O_3 = \{3,5\}$, respectively. The total utility of $r$ $=$ $\{a \rightarrow c\}$ is $u(r)$ $=$ $u(r,s_1)$ $+$ $u(r,s_3)$ $+$ $u(r,s_4)$ $=$ $(1+2)$ $+$ $(2+2)$ $+$ max$\{$$(2+3)$, $(1+3)$, $(2+3)$$\}$ $=$ $7$ + $5$ = 12.

\begin{definition}[Sequential rule support and confidence \cite{zaki2001spade}]
    \label{supAndConf}
     Let $r$ $=$ $X$ $\rightarrow$ $Y$ be a sequential rule, $\mathcal{D}$ be a database and $s$ be a seuqence, $s\in\mathcal{D}$. If $O(r,s) \neq \emptyset$, then we say that $r$ occurs in $s$. Let $seq(r)$ $=$ $\{$$s$ $|$ $O(r,s)$ $\neq$ $\emptyset$$\}$ denote the $s$ in which rule $r$ appears, and $ant(r)$  as the set of sequences containing its antecedent. The value of $sup(r)$ $=$ $|seq(r)|$ represents the support of $r$. We define the confidence of rule $r$ as $conf(r)$ $=$ $sup(r)/sup(X)$ $=$ $|seq(r)|/|ant(r)|$.
\end{definition}

For instance, as shown in Table \ref{database}, the confidence of $r$ $=$ $\{a \rightarrow c\}$ is $conf(r)$ $=$ $|seq(r)|/|ant(r)|$ $=$ $|\{$$s_1$,  $s_3$, $s_4$ $\}|$ $/$ $|\{$$s_1$,  $s_3$, $s_4$, $s_5$ $\}|$ $=$ $3/4$ $=$ 0.75.

\subsection{Problem statement}

This subsection provides the formal problem definition and presents an example to illustrate the process.

\begin{definition}[High-utility sequential rule \cite{zida2015efficient}]
    \label{HUSRDefinition}
    \rm A rule $r$ is defined as a high-utility sequential rule (HUSR) if it meets the conditions $u(r)$ $\geq$ \textit{minutil} and \textit{conf(r)}$\geq$ \textit{minconf}, where \textit{minutil} is a user-specified utility threshold and \textit{minconf} is a confidence threshold in the range $($0$,$1$)$.
\end{definition}

\textbf{Problem statement:} Let $\mathcal{D}$ be a sequence database, with  \textit{minutil} = $\delta$ $\times$ $u(D)$ and  \textit{minconf} given, according to the definition of HUSR, the problem is to find all totally ordered HUSRs in $\mathcal{D}$ that satisfy the \textit{minutil} and \textit{minconf} requirements. Table \ref{ruleExample} presents the output rules with \textit{minutil} = $u(\mathcal{D})$ $\times$ $\delta$ $=$ $64$ $\times$ $0.1$ $=$ 6.4  and \textit{minconf} = 0.6.  For $r$ $=$ $\{a$ $\rightarrow$ $b\}$, $u(r)$ $=$ $13$ $>$ \textit{minutil}, $conf(r)$ $=$ $2$ $/$ $4$ $=$ $0.5$ $<$ \textit{minconf}, therefore, $r$ is not a HUSR. 

\begin{table}[h]
	\centering
	\caption{The discovery rules with $\delta$ $=$ 0.1, \textit{minutil} = 6.4, and \textit{minconf} = 0.6.}
	\label{ruleExample}
        \begin{tabular}{|c|c|c|c|}  
    		\hline 
    		\textbf{ID} & \textbf{HUSR} & \textbf{utility} & \textbf{conf} \\
    		\hline 
    		$r_1$ & \{a\} $\rightarrow$ \{c\} & 13 & 0.75 \\ \hline
                $r_2$ & \{b\} $\rightarrow$ \{c\} & 16 & 0.67 \\ \hline
    		$r_3$ & \{c, e\} $\rightarrow$ \{b\} & 16 & 1.00 \\ \hline 
                $r_4$ & \{e\} $\rightarrow$ \{b\} & 14 & 1.00     \\ \hline 
            
    \end{tabular}	
\end{table}

\section{RSC algorithm}   \label{sec:method}

\begin{figure*}[h]
    \centering
    \includegraphics[width=1\textwidth]{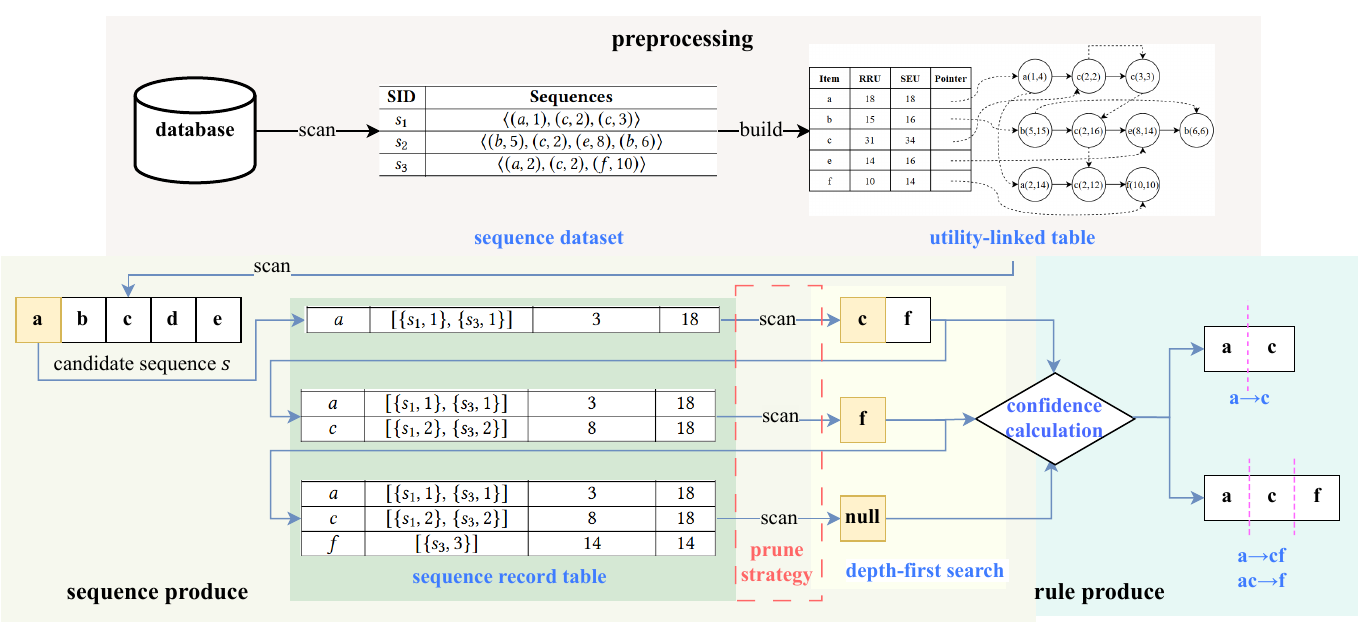}
    \caption{The flowchart of the RSC algorithm: $($$1$$)$ The utility-linked table stores sequence data and facilitates depth-first search; $($$2$$)$The sequence record table stores candidate sequence information and leverages utility to facilitate pruning; $($$3$$)$In rule production, information from the sequence record table is utilized to identify cut points and generate all rules for a sequence in one pass.}
    \label{flowchart}
\end{figure*}

In this section, upper bounds, data structures, some critical rule-producing methods, and pruning strategies used throughout this work are introduced. The RSC algorithm’s flowchart is depicted in Fig. \ref{flowchart}, which illustrates the general process of extracting HUSRs from the given example database. The sequences in the database are preprocessed and stored in the utility-linked table. The initial candidate sequence s is obtained by scanning the header table of the utility-linked table. Subsequently, candidates are taken in order. The corresponding sequence record table is then constructed by scanning backward from the projected position. Meanwhile, any new candidates discovered during this process are added to the depth-first search candidate queue. The sequence record table is completed using a depth-first search approach. Finally, after determining the slicing position, the slice method is applied to the data in the sequence record table to output all the rules that can be generated from the current sequence. As can be seen from the entire workflow, the utility-linked table enables depth-first search to proceed continuously without backtracking. Moreover, each sequence is recorded only once in the sequence record table. After confirming the cut-off points, all rules corresponding to the sequence are generated in a single pass, eliminating the need for repeated utility searches for the same sequence as required by the LRE strategy.

\subsection{Upper bounds}

In utility-driven mining, utility upper bounds are often used to reduce the search space. The idea of remaining utility has been extended into different bounds in various HUSRM algorithms. Since RSC does not adopt the LRE strategy, most previous utility upper bounds cannot be directly applied. Therefore,  new utility upper bounds are proposed for pruning, while retaining the SEU upper bound introduced in HUSRM \cite{zida2015efficient}.

\begin{upper bound}[sequence estimated utility of an item \rm \cite{zida2015efficient}]
    \rm Given an item $i$ and a sequential database $\mathcal{D}$. The sequence estimated utility of item $i$, denoted as \textit{SEU}$(i)$ $=$ $\sum_{s \in seq(i)}u(s,s)$. 
    \label{SEU}
\end{upper bound}
    
\begin{upper bound}[remaining utility of an item \rm  \cite{zida2015efficient}]
    \rm For the distinct item $i$ and the sequence $s$ $=$ $<$$e_1$, $\cdots$, $e_k$, $\cdots$, $e_n$$>$, we say that $i$ appears in $s$, $i\in s$, only if $\exists$ $s_k$ $=$ $i$. The remaining utility of item $e_k$, denoted as \textit{RU}$(e_k,s)$ $=$ $\sum_{m=k}^{m=n}u(e_m,s)$.
    \label{RU}
\end{upper bound}

\begin{upper bound}[reduced remaining utility of an item]
   \rm For the distinct item $i$ and the sequence $s$ $=$ $<$$e_1$, $\cdots$, $e_k$, $\cdots$, $e_n$$>$, $i$ $\in$ $s$ and $s_k$ = $i$. The reduced remaining utility of item $e_k$, denoted as \textit{RRU}$(e_k,s)$ $=$ $u(e_k,s)$ $+$ $\sum\limits_{i^{\prime}\in s^{\prime}}max\{u(i^{\prime},s)\}$, where $s^{\prime}$ $=$ $<$$e_{k+1}$, $\cdots$, $e_n$$>$ called the expansion sequence with sequence $s$ and $e_k$, is denoted as \textit{ES}$(e_k,s)$. 
    \label{RRU}
\end{upper bound}

According to the Upper Bound \ref{RRU}, RRU considers both the utility of the current items in the sequence and the utility of the following items. As a result, RU is guaranteed to exceed the actual utility of the sequence. An example is provided in Table \ref{database}, \textit{RU}$(a,s_1)$ $=$ 6, \textit{RRU}$(a,s_1)$ $=$ 4. In $s_4$, $a$ appears multiple times. Taking the first occurrence of $a$ as an example, denoted as $a_0$. \textit{RU}$(a_0,s_4)$ $=$ 17, while \textit{RRU}$(a_0,s_4)$ $=$ 2 $+$ 6 $+$ 3 $=$ 11. Note that \textit{RRU} removes the utility of repeated items, making it tighter than \textit{RU} and more effective for search space pruning. Specifically, when a sequence has no repeated items, \textit{RRU} is equal to \textit{RU}, and \textit{RRU} degenerates to \textit{RU}.

\begin{definition}[expansion of a Sequence with an item]
  \rm  Consider a sequence $s$ $=$ $<$$e_1$, $\cdots$, $e_m$$>$, $i$ is a single item, and let $\mathcal{D}$ denote the sequence database. We say that $s$ can be extended to $s_1$ $=$ $<$$e_1$, $\cdots$, $e_m$, $i$ $>$ by item $i$ only if $\exists$ $s^*$ $\in$ $\mathcal{D}$, $i$ $\in$ $s^*$, $O(s_1,s^*)$ $\ne$ $\emptyset$. All qualified $s^{*}$ and expansion position $e_p$ $=$ $i$  form the set of expansion positions of $s$ with $i$, represented as \textit{SEP}$(s,i)$ $=$ $\{$ $(s^*_1,{e_p}_1)$, $\cdots$, $(s^*_n,{e_p}_n)$ $\}$. 
\end{definition}

\begin{upper bound}[reduced remaining utility of a sequence]
    \label{RRS}
    \rm For the sequence $s$ that can be extended by item $i$, $\mathcal{D}$ is the database. The reduced remaining utility of $s$ with item $i$, represented as \textit{RRS}$(s,i)$ $=$ $\sum_{ s^{*}\in \Pi_1(SEP(s,i)) \land e_p \in s^{*} \land e_p \in \Pi_2(SEP(s,i)) }max\{u(s,s^{*})) + RRU(e_p,s^*)\}$ 
\end{upper bound}

Since Upper Bound \ref{RRS} incorporates the utility of the extended sequence along with that of items appearing later in the database, it results in an RRS value that overestimates the true utility of the extension. For instance, in Table \ref{database}, $<$$a,$ $b$$>$ appears in $s_4$ and $s_5$. To ensure that a sequence can be used to generate rules, the relative order of its items in the original database must be preserved. Therefore, in $s_4$, only $c$ can be an extension item of $s$, while in $s_5$, no extension items exist. \textit{RRS}$(s,c)$ $=$ $u(s,s_4)$ + $RRU(c,s_4)$ $=$ 8 $+$ 6 $=$ 14. For another sequence $s^{\prime}$ = $<$$a$$>$, \textit{RRS}$(s^{\prime},c)$ $=$ max$\{$$u(s^{\prime},s_1) + \textit{RRU}(c,s_1)$$\}$ $+$ $($$u(s^{\prime},s_3) + \textit{RRU}(c,s_3)$$)$ $+$  $($$u(s^{\prime},s_4)$ + $\textit{RRU}(c,s_4)$$)$ $=$ $max\{3,$ $4\}$ $+$ $(2+12)$ $+$ $max\{8,$ $7$, $8\}$ $=$ $26$.

\begin{strategy}[sequence expansion pruning strategy]
    \label{seps}
    \rm If sequence $s$ can be extended by item $i$ to sequence  $s_1$, $u(s_1)$  will not exceed the Upper Bound  \textit{RRS}$(s_1)$. Therefore, if  \textit{RRS}$(s,i)$ $\leq$ \textit{minutil }, then no matter how the sequence is extended, the items it represents, arranged in sequence order, cannot form a HUSRM. Thus, when \textit{RRS}$(s,i)$ $\leq$ \textit{minutil }, RSC can stop the extension of $s_1$.
\end{strategy}

\begin{proof}
    For the sequence $s$ that can execute an expansion with an item $i$, \textit{SEP}$(s,i)$ $=$ $\{$ $(s^*_1,{e_p}_1)$, $\cdots$, $(s^*_n,{e_p}_n)$ $\}$, and  $\mathcal{D}$ is the database. Extending $s$ with item $i$ results in $s_1$. According to the Upper Bound \ref{RRU} and Definition \ref{sequenceUtility}, k is an arbitrary index, $max \big\{ u(s,s^{*}_k)$ $+$ $u({e_p}_k,s^*_k)$ $+$ $u(ES({e_p}_k,s^*_k),s^*_k) \big\}$ $\leq$ $max \big\{u(s,s^{*}_k)$ $+$ $u({e_p}_k,s^*_k)$ $+$ $\sum_{i^{\prime}\in ES({e_p}_k,s^*_k)}max\{u(i^{\prime},s^*)\}\big\}$   $=$  $max\big\{u(s,s^{*}_k)$ $+$ \textit{RRU}$({e_p}_k,s^*_k)\big\}$, since calculating the utility of a sequence requires considering the order of items, while \textit{RRU} does not. This holds for every extendable sequence. Therefore, after summing all extendable sequences, and based on the upper bound \ref{RRS}, the left side of the inequality represents the maximum utility of the sequence $s$ after any possible extensions, while the right side represents \textit{RRS}$(s,i)$. This holds for every extendable sequence. Thus, no matter how $s$ is extended, its utility in $\mathcal{D}$ will not exceed \textit{RRS}$(s,i)$. Therefore, when \textit{RRS}$(s,i)$ $\leq$ \textit{minutil}, RSC can stop the extension of $s_1$.
\end{proof}

\begin{strategy}[early item pruning strategy]
    \label{seus}
    \rm An item $i$ is considered promising only if \textit{SEU(i)} $\geq$ \textit{minutil}. If not, the item $i$ should be removed from the database.
\end{strategy}

\begin{proof}
    According to the Upper Bound \ref{SEU}, \textit{SEU}$(i)$ $=$ $\sum_{s \in seq(i)}u(s,s)$ $=$ $\sum_{s \in seq(i) \land i^{\prime}\in s}max\{u(i^{\prime},s)\}$. Assume that there exists a sequence $s^{\prime}$, $u(s^{\prime})$ $>$ \textit{minutil}, $i$ $\in$ $s^{\prime}$ and \textit{SEU}(i) $<$ \textit{minutil}. $u(s^{\prime})$ $=$ $\sum_{s\in \mathcal{D}}u(s^{\prime},s)$ $=$ $\sum_{s\in \mathcal{D} \land s^{\prime} \subset s} max\{u(O_i)$ $|$ $O_i \in O(s^{\prime},s)\}$ $\leq$ $\sum_{s\in \mathcal{D} \land s^{\prime} \subset s \land i^{\prime}\in s} max\{u(i^{\prime},s)\}$ $\leq$ \textit{SEU}$(i)$ $=$ $\sum_{s\in seq(i)\land i^{\prime}\in s} max\{u(i^{\prime},s)\}$. Therefore, $u(s^{\prime})$ $\leq$ \textit{SEU}$(i)$ $<$ \textit{minutil}, $u(s^{\prime})$ $>$ \textit{minutil}  form a contradiction, meaning that the assumption is not valid. Thus, a sequence $s$ containing an item with \textit{SEU}$(i)$ $<$ \textit{minutil} cannot achieve $u(s)$ $\geq$ \textit{minutil}, regardless of how it is extended.
\end{proof}

\subsection{Data structures}

To discover all HUSRMs, the RSC algorithm needs to scan the database to generate all candidate sequences. Directly scanning of the database is highly inefficient. Therefore, RSC proposes a data structure called the utility-linked table (\textbf{ULT}) and a sequence record table (\textbf{SRT}) to improve the algorithm's efficiency. The utility-linked table (\textbf{ULT}) has two parts:  a header table and a utility list. To simplify the presentation, only the sequences $s_1$, $s_2$, and $s_3$ from table \ref{database} are used for illustration, as illustrated in Fig. \ref{ULTExample}. \textbf{The header table} consists of four fields. The \textit{Item} field stores the name of the item. The \textit{RRU} field stores the total of \textit{RRU} values for the item across all sequences in the database. The \textit{Pointer} field stores a pointer to the first occurrence of the item in the utility list. \textbf{The utility list} consists of nodes and pointers. Each node stores the actual utility of the current item and its \textit{RRU} value within the sequence. Solid pointers indicate the position of the next item in the original database sequence, and if there is no next item, the pointer is set to null. Dashed pointers point to the next occurrence of the same item in the database, prioritizing the item within the same sequence.

\begin{figure}[h]
    \includegraphics[width=0.5\textwidth]{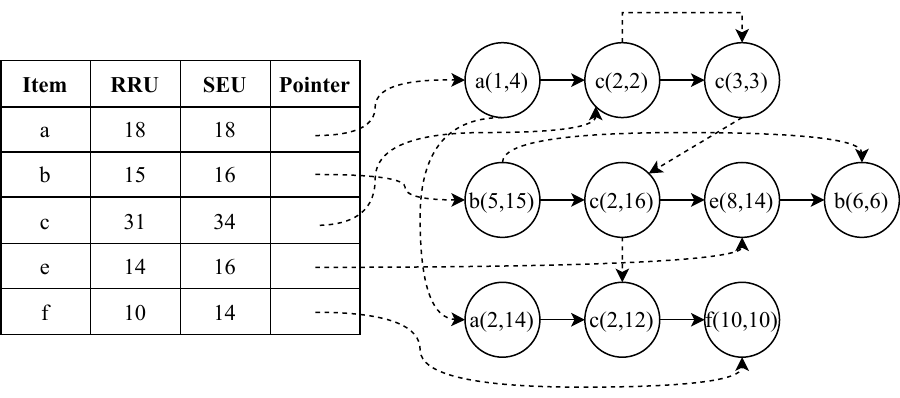}
    \caption{The utility-linked table (\textbf{ULT}) consist of $s_1$, $s_2$, and $s_3$}
    \label{ULTExample}
\end{figure}

The sequence record table (\textbf{SRT}) is used to record and generate candidate sequences. It consists of four fields: \textit{Item}, \textit{OccurrenceMap},  \textit{UntilUtility}, and \textit{RRS}. An example candidate sequence $<$$a$, $c$, $f$$>$ is shown in Table \ref{SRTExample}. The \textit{Item} field stores the name of the item. The \textit{OccurrenceMap} field contains a list of tuples representing the current occurrence positions of the sequence, where each tuple consists of a sequence SID and the position of the last item within that sequence. The \textit{UntilUtility} field stores the utility of the sequence up to the current point. The \textit{RRS} field stores the \textit{RRS} value after the most recent sequence expansion. Specifically, when the sequence contains only a single item, this value is represented by the \textit{RRU} of the item in the database. For clarity, RSC uses \textit{SRT[i]} to denote the $i$-th row in the SRT. For example, in Table \ref{SRTExample}, \textit{SRT[3]} refers to the attributes associated with the candidate sequence $<$$a$, $c$, $f$$>$. For example, in Fig. \ref{ULTExample}, sequence $<$$a$$>$  appears in $s_1$ and $s_3$, $u($$<$$a$$>$$)$ $=$ $3$ and \textit{RRU}$(a)$ $=$ 18, which form the first row in Table \ref{SRTExample}. Since the second and third rows in Table \ref{SRTExample} record the information of sequences composed of accumulated items, they represent the properties of sequences $<$$a$, $c$$>$ and $<$$a$, $c$, $f$$>$, respectively.

\begin{table}[h]
	\centering
	\caption{SRT of candidate sequence $<$$a$, $c$, $f$$>$ in Fig. \ref{ULTExample} }
	\label{SRTExample}
        \begin{tabular}{|c|c|c|c|c|}  
    		\hline 
    		 \textbf{Item} & \textbf{OccurrenceMap} &  \textbf{UntilUtility} & \textbf{RRS}  \\  \hline 
    		$a$ & $[\{s_1,1\},$ $\{s_3,1\}]$ &  3 & 18  \\  \hline 
    		$c$ & $[\{s_1,2\},$ $\{s_3,2\}]$ &  8 & 18  \\  \hline 
            $f$ & $[\{s_3,3\}]$ &  14 & 14  \\     		\hline             
    \end{tabular}
\end{table}

\subsection{Proposed algorithm}

The primary algorithm and corresponding workflow are described in this subsection. To efficiently generate rules from candidate sequences, the following theorems are presented.

\begin{theorem}
    \label{ruleProduceJudge}
    \rm A sequence $s$ $=$ $<$$e_1$, $\cdots$,  $e_n$$>$ and its corresponding SRT can generate rules which are composed of items in $s$ arranged in order only if \textit{SRT[n].sup} $\geq$ \textit{SRT[n-1].sup} $\times$ \textit{minconf}. 
\end{theorem}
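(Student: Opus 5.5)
The plan is to prove the contrapositive via anti-monotonicity of support. Interpret \textit{SRT[k].sup} as $sup(<e_1,\cdots,e_k>)$, i.e.\ the number of distinct sequence IDs in the \textit{OccurrenceMap} of row $k$. Any rule $r = X \rightarrow Y$ built from the items of $s$ in order has antecedent $X = <e_1,\cdots,e_j>$ and consequent $Y = <e_{j+1},\cdots,e_n>$ for some cut point $1 \le j \le n-1$. By Definition \ref{supAndConf}, the rule sequence $s_r$ equals $s$, so $sup(r) = sup(s) = $ \textit{SRT[n].sup}, while $sup(X) = $ \textit{SRT[j].sup}. Hence
\[
conf(r) = \frac{\textit{SRT[n].sup}}{\textit{SRT[j].sup}}.
\]

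The key step is showing that the denominator is smallest at $j = n-1$. If a database sequence $s^*$ contains $<e_1,\cdots,e_{j+1}>$ (Definition \ref{sequenceUtility}), then it also contains the prefix $<e_1,\cdots,e_j>$: restrict the index witness $j_1<\cdots<j_{j+1}$ to its first $j$ entries. This gives $seq(<e_1,\cdots,e_{j+1}>) \subseteq seq(<e_1,\cdots,e_j>)$, so \textit{SRT[j].sup} is non-increasing in $j$. Therefore \textit{SRT[j].sup} $\ge$ \textit{SRT[n-1].sup} for every $j \le n-1$, and every cut gives $conf(r) \le \textit{SRT[n].sup}/\textit{SRT[n-1].sup}$. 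The maximum is attained by the rule $<e_1,\cdots,e_{n-1}> \rightarrow <e_n>$.

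To conclude, suppose \textit{SRT[n].sup} $<$ \textit{SRT[n-1].sup} $\times$ \textit{minconf}. Then every rule obtained by segmenting $s$ has confidence below \textit{minconf}, so by Definition \ref{HUSRDefinition} none is a HUSR. Equivalently, $s$ can generate a rule only if the stated inequality holds. When $n = 1$ no cut exists, so the claim is vacuous.

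The main obstacle is bookkeeping rather than mathematics. I need to check that the SRT rows really record supports of the consecutive prefixes $<e_1,\cdots,e_k>$: the \textit{OccurrenceMap} may list several positions per SID, so the support must count distinct SIDs. I also need to confirm that the totally ordered occurrence notion makes $seq(r)$ coincide with $seq(s)$, which holds because rule occurrence is defined as occurrence of $s_r$.
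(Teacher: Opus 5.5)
Your proposal is correct and follows essentially the same route as the paper. You show that the prefix supports \textit{SRT[1].sup} $\geq \cdots \geq$ \textit{SRT[n].sup} are non-increasing, note that every cut of $s$ gives a rule whose support is fixed at \textit{SRT[n].sup}, and conclude that the cut at $n-1$ has the smallest antecedent support and hence the largest confidence, so if that cut fails \textit{minconf} every cut fails. Your monotonicity argument, truncating the index witness of an occurrence, is a slightly cleaner form of the paper's ``further screening at each occurrence position'' induction, but it is the same key lemma.
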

\begin{proof}
    First, we prove the support monotonicity of pattern representation combinations in the SRT, namely \textit{SRT[1].sup} $\geq$ \textit{SRT[2].sup} $\geq$ $\cdots$ $\geq$ \textit{SRT[n].sup}. For \textit{SRT[n+1]} and \textit{SRT[n]}, which denote \{$e_1$, $\cdots$, $e_n$, $e_{n+1}$\} and \{$e_1$, $\cdots$, $e_n$\} respectively, since \textit{SRT[n+1]} is an extension of \textit{SRT[n]}, we need to perform further screening at each occurrence position $e_{n+1}$ of \textit{SRT[n+1]}. Specifically, we need to check whether there exists an element $e_k$ in the sequence such that $e_k$ $=$ $e_{n+1}$ and $k$ $>$ $n$ holds. Therefore, the support of \textit{SRT[n+1]} is less than or equal to that of \textit{SRT[n]}. Meanwhile, since SRT only records sequences with valid support (i.e., \textit{SRT[1].sup} $>$ $0$), \textit{SRT[1].sup} $\geq$ \textit{SRT[2].sup} $\geq$ $\cdots$ $\geq$ \textit{SRT[n].sup} holds by mathematical induction.
    According to Definitions \ref{HUSRDefinition} and \ref{supAndConf}, the rule $r$ must satisfy $sup(r)/sup(X)$ $\geq$ \textit{minconf}. Besides, the support satisfies the downward closure property, which means \textit{SRT[1].sup} $\geq$ \textit{SRT[2].sup} $\geq$ $\cdots$ $\geq$ \textit{SRT[n].sup}. For all rules that can be generated from $s$, the support \textit{sup(r)} is fixed as \textit{SRT[n].sup}. Therefore, if the minimum $sup(X)$, i.e., \textit{SRT[n-1].sup}, does not satisfy $sup(r)/sup(X)$ $\geq$ \textit{minconf}, then $s$ cannot generate any rules. 
\end{proof}

In Table \ref{SRTExample}, \textit{minconf} $=$ 0.6. When the candidate sequence $s$ $=$ $<$$a$, $c$$>$, $sup(<$$a$, $c$$>)$ $=$ $2$ $\geq$ $sup$$(<$$a$$>)$ $\times$ \textit{minconf} $=$ 1.2, therefore $s$ can generate rules. When the candidate sequence $s$ $=$ $<$$a$, $c$, $f$$>$, $sup($$<$$a$, $c$, $f$$>$$)$ $=$ $1$ $<$ $sup$$(<$$a$, $c$$>)$ $\times$ \textit{minconf} $=$ 1.2; therefore, in this case, $s$ can not generate rules.

\begin{theorem}
    \label{ruleProduceQ}
    \rm Consider a sequence $s$ $=$ $<$$e_1$, $\cdots$, $e_n$$>$, if \textit{SRT[k+1].sup} $\leq$ \textit{SRT[n].sup} $/$ \textit{minconf} $<$ \textit{SRT[k].sup}, $s$ could generate at most $n-k-1$ rules composed of items in $s$ arranged in order, where $1$ $\leq$ $k$ $\leq$ $n$. When $k$ $<$ $n$, by iteratively increasing $k$ and outputting rules in the form $r$ $=$ $<$ \textit{SRT[1].Item}, $\cdots$, \textit{SRT[k].Item}$>$ $\rightarrow$ $<$ \textit{SRT[k+1].Item}, $\cdots$, \textit{SRT[n].Item}$>$, all possible rules can be generated, ensuring the completeness of the resulting rule set.
\end{theorem}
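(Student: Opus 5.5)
The argument rests on two facts. The first is the support monotonicity $\textit{SRT[1].sup} \geq \textit{SRT[2].sup} \geq \cdots \geq \textit{SRT[n].sup}$, already established in the proof of Theorem \ref{ruleProduceJudge}. The second is that every rule "composed of items in $s$ arranged in order" has the form $r_j = <\textit{SRT[1].Item},\cdots,\textit{SRT[j].Item}> \rightarrow <\textit{SRT[j+1].Item},\cdots,\textit{SRT[n].Item}>$ for some cut point $1 \leq j \leq n-1$. The second fact holds because a totally ordered rule whose rule sequence is $s$ is determined entirely by where the antecedent ends. Both $X \neq \emptyset$ and $Y \neq \emptyset$ force $1 \leq j \leq n-1$.

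For each such $r_j$, the rule sequence is $s$ itself, so $sup(r_j) = \textit{SRT[n].sup}$. The antecedent is the prefix $<e_1,\cdots,e_j>$, so $sup(X) = \textit{SRT[j].sup}$. By Definition \ref{supAndConf}, $conf(r_j) \geq \textit{minconf}$ is therefore equivalent to $\textit{SRT[j].sup} \leq \textit{SRT[n].sup}/\textit{minconf}$.

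Monotonicity turns this into a threshold condition on $j$. The set of cut points $j$ satisfying the inequality is an up-set of $\{1,\dots,n-1\}$: once $\textit{SRT[j].sup}$ falls below the bound, every later prefix also does.

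Now take $k$ as in the hypothesis, with $\textit{SRT[k+1].sup} \leq \textit{SRT[n].sup}/\textit{minconf} < \textit{SRT[k].sup}$.
\begin{itemize}
\item For every $j \leq k$, monotonicity gives $\textit{SRT[j].sup} \geq \textit{SRT[k].sup}$, which exceeds the bound, so $r_j$ fails the confidence test.
\item For every $j \geq k+1$, monotonicity gives $\textit{SRT[j].sup} \leq \textit{SRT[k+1].sup}$, which is within the bound, so $r_j$ passes.
\end{itemize}
Hence the confidence-valid rules are exactly $r_{k+1},\dots,r_{n-1}$, which are $n-k-1$ rules. The count is "at most" $n-k-1$ because the utility threshold may discard some of them. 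This also explains the enumeration procedure: starting from the segmentation point and increasing the cut index up to $n-1$ outputs precisely these candidates.

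Completeness then follows from the parametrization. Every HUSR whose rule sequence is $s$ equals some $r_j$, and it can only be valid if $j \geq k+1$, so the iteration visits it. Combined with the fact that the depth-first generation enumerates every candidate sequence that survives Strategies \ref{seps} and \ref{seus}, no HUSR is lost.

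The main obstacle is an indexing mismatch between the statement and the rule form. The hypothesis singles out $k$ as the last failing prefix, so the first valid rule has antecedent of length $k+1$. The displayed rule form with antecedent $\textit{SRT[1..k]}$ must therefore be read as starting from $k+1$, or equivalently with $k$ redefined as the first index satisfying the bound. I would fix this convention explicitly at the start so that the count $n-k-1$ and the enumeration agree.

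Two edge cases need separate treatment:
\begin{itemize}
\item If no prefix exceeds the bound, the hypothesis cannot be met with $k \geq 1$. I would adopt the convention $\textit{SRT[0].sup} = +\infty$, i.e. $k = 0$, in which case all $n-1$ rules are admissible.
\item If $k \geq n-1$, no rule is produced, which is consistent with Theorem \ref{ruleProduceJudge} failing.
\end{itemize}
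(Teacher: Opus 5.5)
Your proposal is correct and follows the same route as the paper: fix $sup(r)=\textit{SRT[n].sup}$, rewrite the confidence test as $\textit{SRT[j].sup}\le \textit{SRT[n].sup}/\textit{minconf}$, and use support monotonicity to show that the admissible cut points are exactly the tail $j=k+1,\dots,n-1$. Your explicit handling of the off-by-one tightens the paper's argument without changing its approach. The paper's proof claims that indices $k$ through $n$ all satisfy the bound, even though $\textit{SRT[k]}$ fails it, and the displayed rule form should start at antecedent length $k+1$. Your $k=0$ convention likewise covers the case the paper's statement omits.
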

\begin{proof}
    According to Definitions \ref{HUSRDefinition} and \ref{supAndConf}, the rule $r$ must satisfy \textit{sup(r)}/\textit{minconf} $\geq$ \textit{sup(X)}. Since the support \textit{sup(r)} is fixed as \textit{SRT[n].sup} and \textit{SRT[1].sup} $\geq$ \textit{SRT[2].sup} $\geq$ $\cdots$ $\geq$ \textit{SRT[n].sup} $=$ \textit{sup(r)}, the value on the left side of the inequality \textit{sup(r)} $\geq$ \textit{sup(X)} $*$ \textit{minconf} is known. Additionally, since \textit{minconf} is a known value, we only need to find all \textit{sup(X)} that satisfy the inequality to ensure the completeness of rule generation.
    Based on the support monotonicity of the SRT, when it is found that the support of \textit{SRT[k+1]} satisfies the inequality while the support of \textit{SRT[k]} does not, for all values of $i$ where $1<i<k$,  \textit{SRT[i].sup} $\geq$ \textit{SRT[k].sup}. Consequently, none of these satisfy the requirement of \textit{minconf}. Likewise, for all  $k$ $\leq$ $i$ $\leq$ $n$  , \textit{SRT[i].sup} $\leq$ \textit{SRT[k].sup}. Hence, the support values from $k$ to $n$ all satisfy the requirement of \textit{minconf}, and thus at most $n - k - 1$ rules can be generated.
    By finding the maximum value of \textit{sup(X)} that satisfies the inequality, the algorithm can progressively move the $\rightarrow$ rightward to generate all rules corresponding to the sequence $s$ without requiring further confidence calculations. For each sequence, the algorithm does not miss any rule, as long as the maximum \textit{sup(X)} that satisfies the constraint \textit{sup(r)}  $\geq$ \textit{sup(X)} $*$ \textit{minconf} is found. Subsequently, by virtue of support monotonicity, all qualified sequences are traversed and generated, thus guaranteeing the completeness of rule generation.
\end{proof}

For instance, consider the sequence $s$ $=$ $<$$a$, $b$, $c$, $d$$>$, the corresponding support values are 5, 4, 3, and 2, \textit{minconf} $=$ 0.6, $\textit{sup}(r)$/\textit{minconf} $=$ 3.3 $<$ 4, $s$ could generate at most one rule. By employing the pruning strategies and data structures above, we propose a rule generation method based on segmentation with confidence (namely RSC). The main pseudocode of RSC is presented in Algorithm \ref{alg:RSC}, Algorithm \ref{alg:SRTGrowth}, and Algorithm \ref{alg:ruleProduce}, respectively.

\begin{algorithm}[ht]
    \small
    \caption{RSC algorithm}
    \label{alg:RSC}
    \KwIn{$\mathcal{D}$: the database, \textit{minconf}, \textit{minutil}.}
    \KwOut{all HUSRs.}
    scan $\mathcal{D}$ to construct ULT;\\
    use ULT.\textit{SEU} to delete unpromising items in ULT;\\
    scan ULT to compute all \textit{RRU} for each item;\\
    initialize SRT $=$ $\emptyset$, candidate sequence $s$ $=$ $\emptyset$;\\
    \ForEach{ $a$ $\in$ \textit{ULT.Item}}{
        $s$ $\leftarrow$ $s$ $\cup$ $a$;\\
        SRT.add(\textit{a.OccurrenceMap}, \textit{a.UntilUtility}, \textit{RRU}$(a)$);\\
        scan ULT to get the set $B$ of all items that occurs after $a$;\\
        \ForEach{ $b$ $\in$ $B$}{
        call \textbf{SRTGrowth}$($ULT, SRT, $s$, $b$, \textit{minutil}, \textit{minconf}$)$;\\     
        }
    }
\end{algorithm}

In Algorithm \ref{alg:RSC}, RSC first scans $\mathcal{D}$ row by row to construct the ULT and uses the \textit{SEU} of each item to eliminate all irrelevant items from $\mathcal{D}$ (lines 1-2). After applying Strategy \ref{seus}, all promising items are stored in the ULT. Therefore, RSC can scan the ULT to compute the \textit{RRU} of each item in each sequence and initialize the SRT and the candidate sequence $s$ (lines 3-4). Next, for each item $a$ $\in$ \textit{ULT.Item}, RSC adds $a$ to the candidate sequence $s$. Based on the current SRT and sequence $s$, RSC computes the attributes of the extended $s$, updates them in the SRT, and identifies a new set of extension items $B$ based on the current $s$ (lines 6-8). For each $b$ $\in$ $B$, RSC will call SRTGrowth to extend the SRT and $s$ (lines 9-11).

\begin{algorithm}[ht]
    \small
    \caption{The SRTGrowth algorithm}
    \label{alg:SRTGrowth}
    \KwIn{ULT: utility-linked table of the database, SRT: a sequence record table, $s$: the candidate sequence, $b$: current candidate item, \textit{minutil}, \textit{minconf}.}
    $s$ $\leftarrow$ $s$ $\cup$ $b$;\\
    SRT.add(\textit{b.OccurrenceMap}, \textit{b.UntilUtility}, \textit{RRS}$(s,b)$);\\
    call \textbf{ruleProduce}(SRT, \textit{minutil}, \textit{minconf})\\
    scan ULT to get the set $C$ of all items that occurs after $s$;\\
    \ForEach{ $c$ $\in$ $C$}{
    \If{\textit{RRS}$(s,c)$ $\geq$ \textit{minutil}  }{
    call \textbf{SRTGrowth}$($ULT, SRT, $s$, $c$, \textit{minutil}, \textit{minconf}$)$;
    }
    }
\end{algorithm}
 
In Algorithm \ref{alg:SRTGrowth}, RSC will extend the SRT and candidate sequences. The SRTGrowth algorithm takes ULT, SRT, the candidate sequence $s$, the current candidate item $b$,  \textit{minutil}, and \textit{minconf} as input. First, SRTGrowth extends the sequence $s$ and updates the relevant attributes of the new extension item $b$ in the new sequence $s$ within the SRT (lines 1-2). Then, it calls the ruleProduce procedure to generate all corresponding rules (line 3). Similarly, SRTGrowth scans the ULT to obtain all candidate items set $C$ for the new sequence $s$ (line 4). Since the search is conducted based on $s$, this is equivalent to projecting the database. For each new candidate item $c$ $\in$ $C$, if \textit{RRS}$(s,c)$ $\geq$ \textit{minutil}, then SRTGrowth is called recursively until all possible candidate sequences have been generated (lines 5-9).

\begin{algorithm}[ht]
    \small
    \caption{The ruleProduce algorithm}
    \label{alg:ruleProduce}
    \KwIn{SRT: a sequence record table,  \textit{minutil}, \textit{minconf}, $n$ denotes the number of items in the SRT.}
    \KwOut{all HUSRs.}
    \If{ \textit{SRT[n].UntilUtility}  $\geq$  \textit{minutil} and \textit{SRT[n].sup} $\geq$ \textit{SRT[n-1].sup} $\times$  \textit{minconf}}{
    find the index $k$ such that \textit{SRT[k].sup} $\leq$ \textit{SRT[n].sup}  $/$ \textit{minconf} $\leq$ \textit{SRT[k-1].sup};\\
    \ForEach{ $k$ $<$ $n$}{
    output the $r$ $=$ $<$ \textit{SRT[1].Item}, $\cdots$, \textit{SRT[k].Item}$>$ $\rightarrow$ $<$ \textit{SRT[k+1].Item}, $\cdots$, \textit{SRT[n].Item}$>$;\\
    k++;\\
    }
    }   
\end{algorithm}

In Algorithm \ref{alg:ruleProduce}, RSC will output all HUSRs. First, according to Definition \ref{HUSRDefinition} and Theorem \ref{ruleProduceJudge}, RSC quickly returns for any SRT that has no potential to generate rules (line 1). Next, the RSC uses Theorem \ref{ruleProduceQ} to find the index $k$ that satisfies the \textit{SRT[k].sup} $\leq$ \textit{SRT[n].sup}  $/$ \textit{minconf} $\leq$ \textit{SRT[k-1].sup}, which determines the leftmost position of the arrow "$\rightarrow$" in all rules that can be generated from the current sequence $s$ corresponding to the SRT (line 2). Then, it incrementally increases $k$, which corresponds to gradually moving the arrow "$\rightarrow$" to the right within the sequence, until all valid rules are generated (lines 4-5).

\subsection{Computational complexity analysis}

\textbf{Time complexity}: Assume the dataset contains $E$ sequences, with $L$ denoting the maximum sequence length, $I$ represents the total number of unique items, and $H$ is the maximum search depth. The initial scan of the dataset to construct the ULT requires the computational and memory requirements of $O$($E\times L$). For each distinct item, the algorithm performs the depth-first generation of candidate sequences, resulting in $I$ recursive steps. Each step may at most require scanning the dataset, leading to the time complexity of $O$($E \times L$). After a candidate sequence is generated, rule generation takes $O$($\log_2I$)  time. Therefore, the complete  time complexity is: $O$($E$ $\times$ $L$ $+$ $I$ $\times$ ($E$ $\times$ $L$ $+$ $\log_2I$)). 

\textbf{Space complexity}: At each level, candidate items must be retained in memory, and each itemset records the occurrence of candidate items. Therefore, the maximum space required for candidate sequences at a single level is $I\times E$. In addition, the algorithm maintains an SRT, which reaches its maximum size at depth $H$, that is $H\times E$. Therefore, the algorithm requires a total space complexity of $O$($E$ $\times$ $L$ $+$ $H$ $\times$ $I$ $\times E$ $+$ $H$ $\times$ $E$).

\section{Experiments}   \label{sec:experiments}

A set of experiments is carried out in this section to assess the performance of RSC. Since dataset characteristics affect algorithm results, we test on both real and synthetic datasets. Because existing totally ordered rule mining algorithms do not support datasets containing repetitive items, we adopt the same preprocessing method as the most advanced algorithm TotalSR. Specifically, within each sequence, only the item with the maximum utility among duplicates is retained. TotalSR \cite{zhang2024totally}, being the leading totally ordered rule mining algorithm, is selected for comparison. Note that TotalUS is a variant of TotalSR that incorporates the pruning strategies and techniques proposed in US-Rule \cite{huang2023us} within the TotalSR framework.

\subsection{Dataset description}

Table \ref{datasets}  presents the main characteristics of the datasets. $\vert\mathcal{D}\vert$ indicates the dataset size in terms of sequences, while $\vert \textit{$\Sigma$} \vert$  corresponds to the count of unique items. The average number of items per sequence is represented as $\textit{avg}(\textit{S})$, and the maximum number of items per sequence is represented as $\textit{max}(\textit{S})$. All datasets are available on the open-source platform SPMF \cite{spmf} and have been widely used in evaluations of related algorithms \cite{2023Anomaly,zhang2024totally,lin2023user,10909343}. The descriptions of each dataset are provided below.

\begin{table}[ht]
	\caption{Features of the datasets} 
	\label{datasets}
	\centering
	\begin{tabular}{|c|c|c|c|c|}
    	\hline
    	\textbf{Dataset} &  \textbf{$\vert\mathcal{D}\vert$}  &  \textbf{$\vert \textit{$\Sigma$} \vert$} & $\textit{avg}(\textit{S})$ & $\textit{max}(\textit{S})$  \\
    	\hline 
            Bible & 36,369 & 13,905 & 21.64 & 100  \\ \hline
            Kosarak10k & 10,000 & 10,094 & 8.14 & 608 \\ \hline
            Leviathan & 5,834 & 9,025 & 33.81 & 100  \\ \hline
            SIGN & 730 & 267 & 52 & 94 \\ \hline
    	Syn10K  & 10,000  & 7312 & 27.11 & 213  \\ \hline
            Syn20K  & 20,000  & 7442 & 26.98 & 213\\ \hline
            Syn40K  & 40,000  & 7537 & 26.84 & 213  \\ \hline
            Syn80K  & 79718 & 7584 & 26.79 & 213\\ \hline
            Syn160K  & 159501  & 7609 & 26.75 & 213  \\ \hline
            Syn240K  & 239211  & 7617 & 26.77 & 213\\ \hline
            Syn320K  & 318889  & 7620 & 26.76 & 213  \\ \hline
            Syn400K  & 398716  & 7621 & 26.75 & 213\\
        \hline 
	\end{tabular}
\end{table}

\begin{figure*}[h]
    \centering
    \includegraphics[width=1\textwidth]{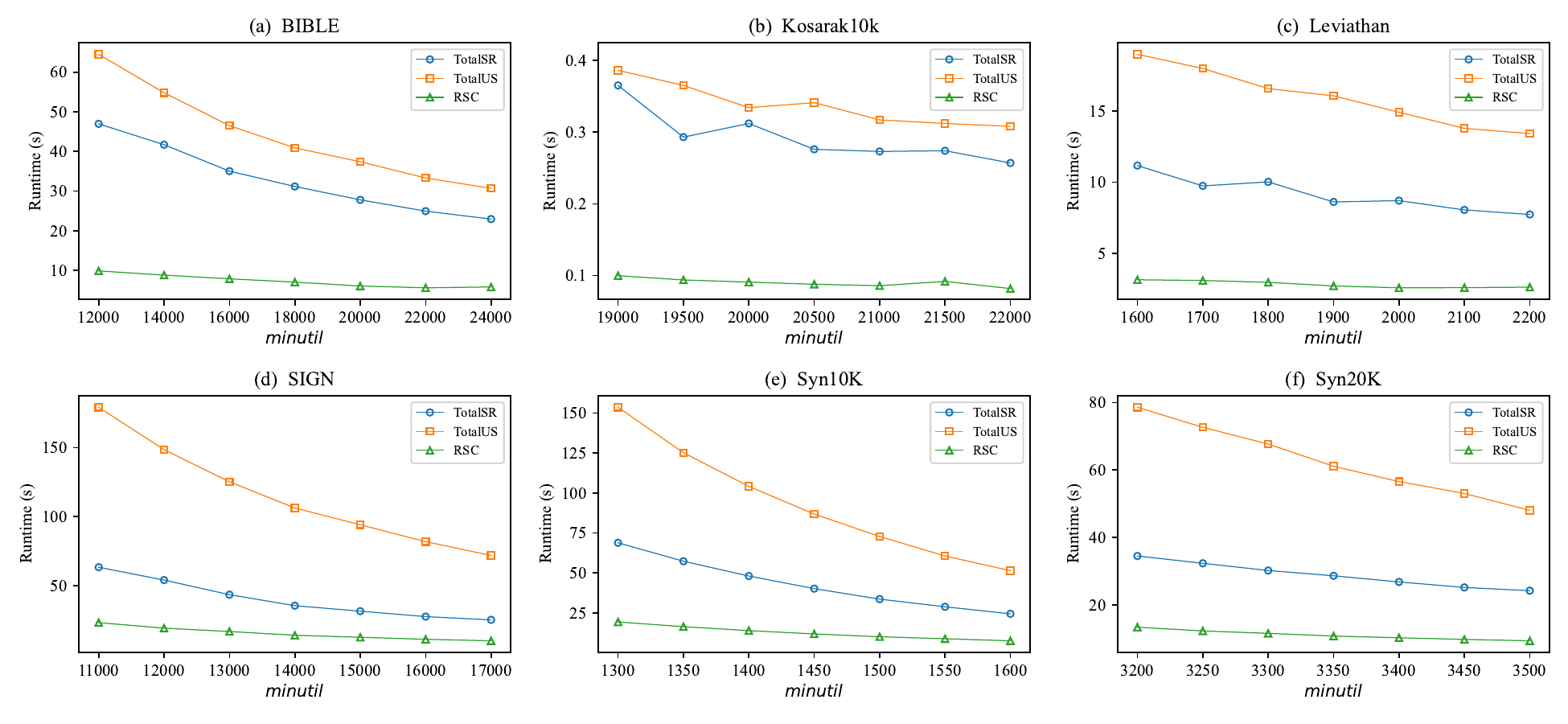}
    \caption{Runtime under different \textit{minutil} (\textit{minconf} $=$ 0.6) }
    \label{TimeTest}
\end{figure*}

\begin{itemize}[leftmargin=*,labelsep=0.5em,align=left]
    \item \textbf{Bible} dataset is derived from the book \textit{BIBLE}. Each sequence represents a Bible sentence and has moderate length and medium density.
    \item \textbf{Kosarak10k} consists of real clickstream data, with short sequences collected from a Hungarian online news platform. It is characterized by sparse distribution and brief sequence lengths.
    \item \textbf{Sign} dataset contains real-world sequences captured from sign language demonstrations. It is highly dense and contains many long sequences. 
    \item \textbf{Leviathan} is based on textual content from Thomas Hobbes’ novel \textit{Leviathan}. It has moderate density and medium-length sequences.
    \item The \textbf{Syn} datasets are artificially generated using the IBM data generator. The primary difference among them lies in the dataset size.
\end{itemize}

\subsection{Parameter settings}
To ensure evaluation reliability, all experiments were conducted on a PC equipped with Ubuntu 24.04.2 with the 6.11.0-17 Linux kernel, equipped with a 12th Gen Intel® CoreTM i9-12900K processor and 64 GB of RAM. In the experiments, the \textit{minconf} for all algorithms is set to the default value of $0.6$. The \textit{minutil} of all algorithms is set to different values,  as shown in Fig. \ref{TimeTest}. All experimental implementations were realized in Java, and the source code alongside datasets is provided on GitHub: \href{https://github.com/IOTS-HIT/Segmentation}{https://github.com/IOTS-HIT/Segmentation}.

\begin{figure*}[h]
    \centering
    \includegraphics[width=1\textwidth]{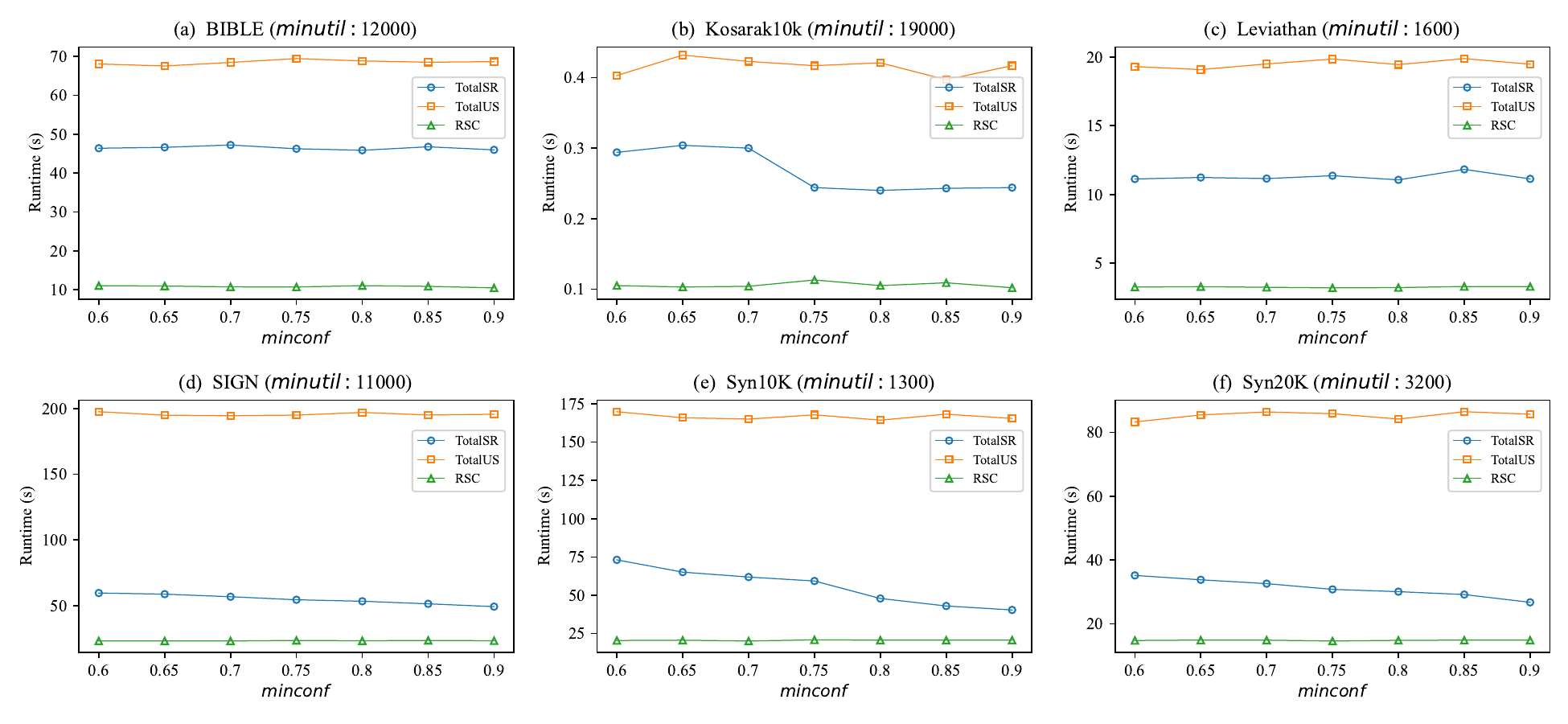}
    \caption{Runtime under different \textit{minconf} }
    \label{ConfTest}
\end{figure*}

\begin{figure*}[h]
    \centering
    \includegraphics[width=1\textwidth]{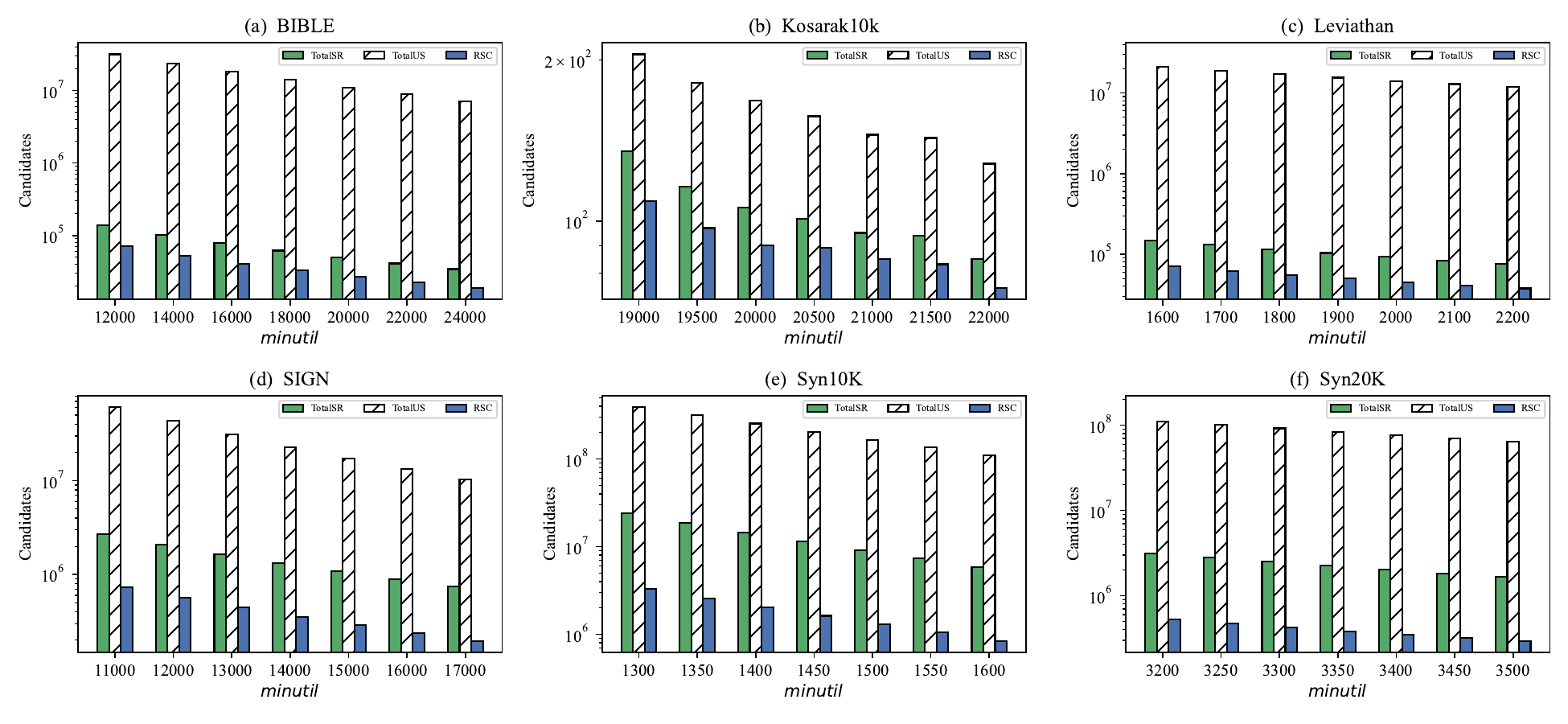}
    \caption{The generated candidates under different \textit{minutil} (\textit{minconf} $=$ 0.6)}
    \label{TableTest}
\end{figure*}

\begin{figure*}[t]
    \centering
    \includegraphics[width=1\textwidth]{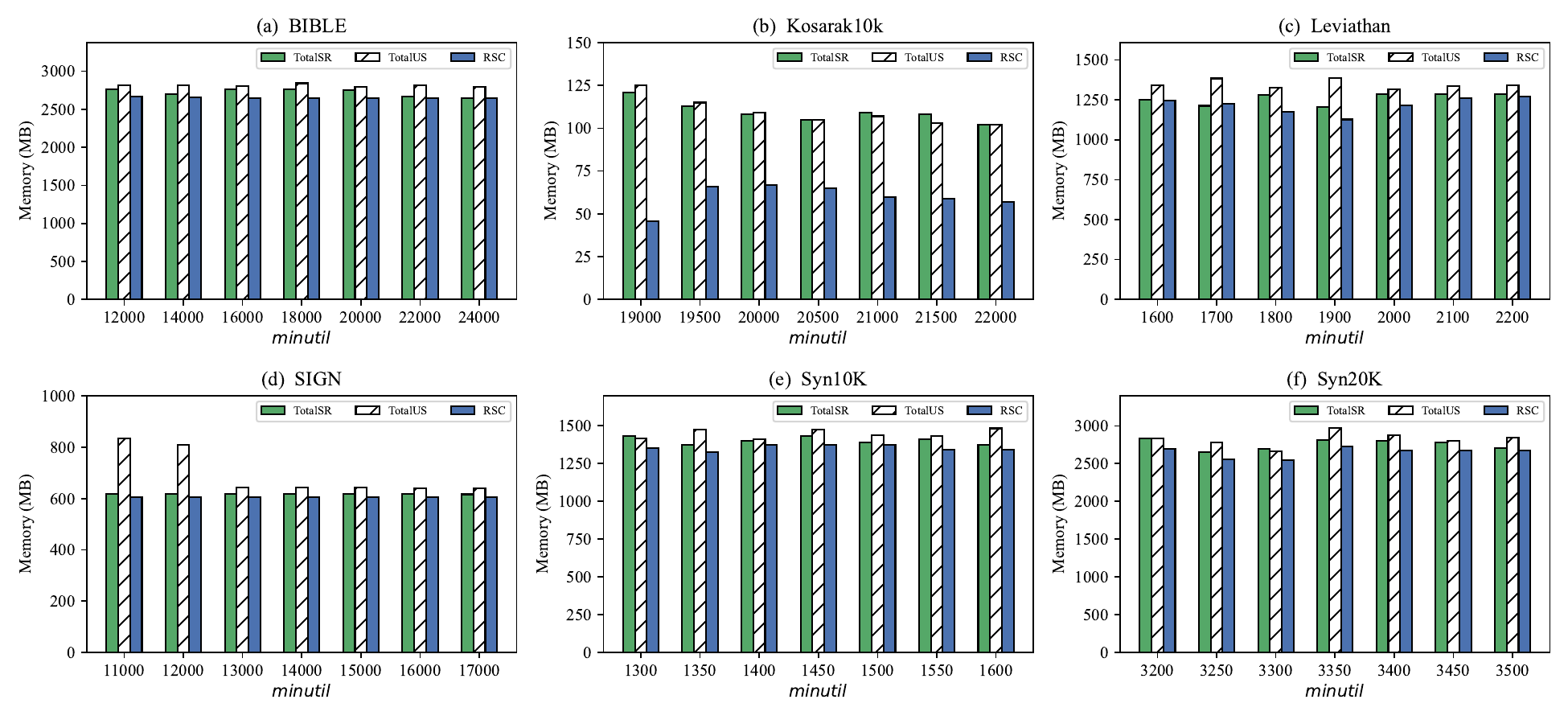}
    \caption{The memory usage under various \textit{minutil} (\textit{minconf} $=$ 0.6) }

    \label{SizeTest}
\end{figure*}

\begin{figure*}[t]
    \flushright
    \includegraphics[width=1\textwidth]{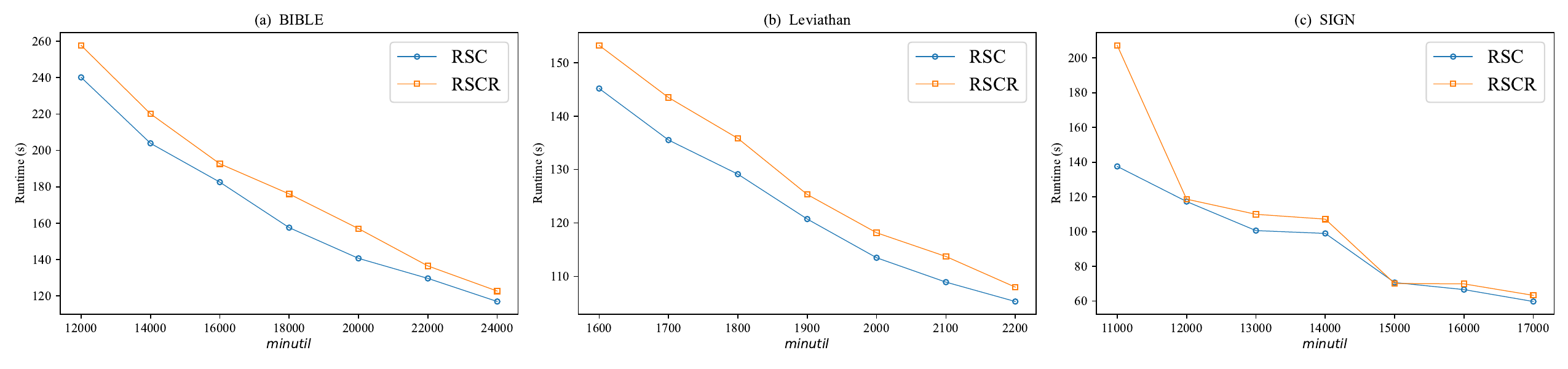}
    \caption{Runtime  under different upper bounds (\textit{minconf} $=$ 0.6) }
    \label{timeP}
\end{figure*}

\subsection{Efficiency analysis}

The runtime of each algorithm under different \textit{minutil} thresholds is presented in Fig. \ref{TimeTest}, while Fig. \ref{TableTest} presents the total number of candidate sequences generated during execution. As shown in Fig. \ref{TimeTest}, the runtime of all methods decreases as the threshold increases. This observation aligns with expectations, as a higher threshold results in fewer rules satisfying the utility constraint. Moreover, due to the integrated pruning strategies, the algorithms can effectively backtrack from the current search depth once further expansion is deemed unnecessary. Additionally, the RSC algorithm consistently achieves the shortest runtime. This is expected, as RSC avoids redundant utility computations, significantly reducing the number of candidate sequences and thereby lowering the overall execution time. This effect is particularly evident on larger datasets such as BIBLE, SIGN, Syn10K, and Syn20K, where RSC achieves a runtime reduction of approximately 50\% to 80\% compared to existing methods such as TotalSR and TotalUS. This significant improvement is closely related to RSC’s ability to avoid redundant utility computations. Additionally, the runtime is positively correlated with the number of generated candidate sequences. In Fig. \ref{TableTest}, candidate generation by each algorithm decreases as the threshold is lowered. Compared to TotalUS, TotalSR applies tighter utility upper bounds, which reduce the search space and improve efficiency. However, due to redundant computations, it still generates more candidate sequences and requires more runtime than RSC. 

Fig. \ref{ConfTest} illustrates the impact of different \textit{minconf} settings on runtime under a specific \textit{minutil} value. Specifically, RSC consistently ranks first in speed, followed by TotalSR in second place and TotalUS in third, which is consistent with the previous experimental results. It is worth noting that the runtime of the three algorithms does not change significantly with the variation of confidence. This is because RSC only needs to determine the cut point position based on confidence without adding additional calculations, while TotalSR incorporates confidence-based pruning compared to TotalUS, resulting in a slight decrease in computation time. As can be seen from the figure, when a certain level of confidence is considered, the setting of the confidence level is unlikely to become a limiting factor for runtime.
\vfill

\begin{figure*}[t]
    \flushright
    \includegraphics[width=1\textwidth]{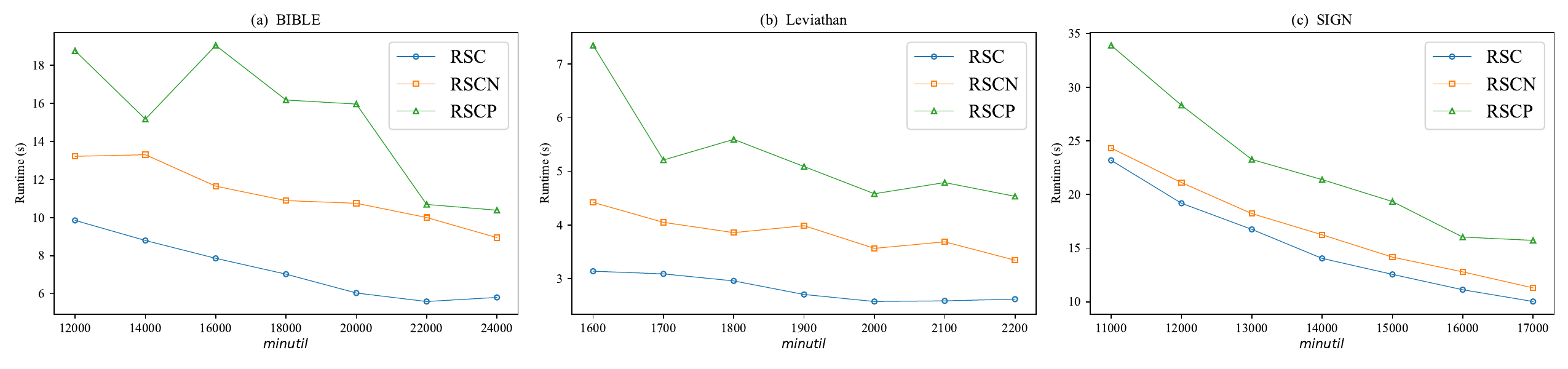}
    \caption{Runtime under different strategies (\textit{minconf} $=$ 0.6) }
    \label{timeS}
\end{figure*}

\begin{figure*}[h]
    \centering
    \includegraphics[width=1\textwidth]{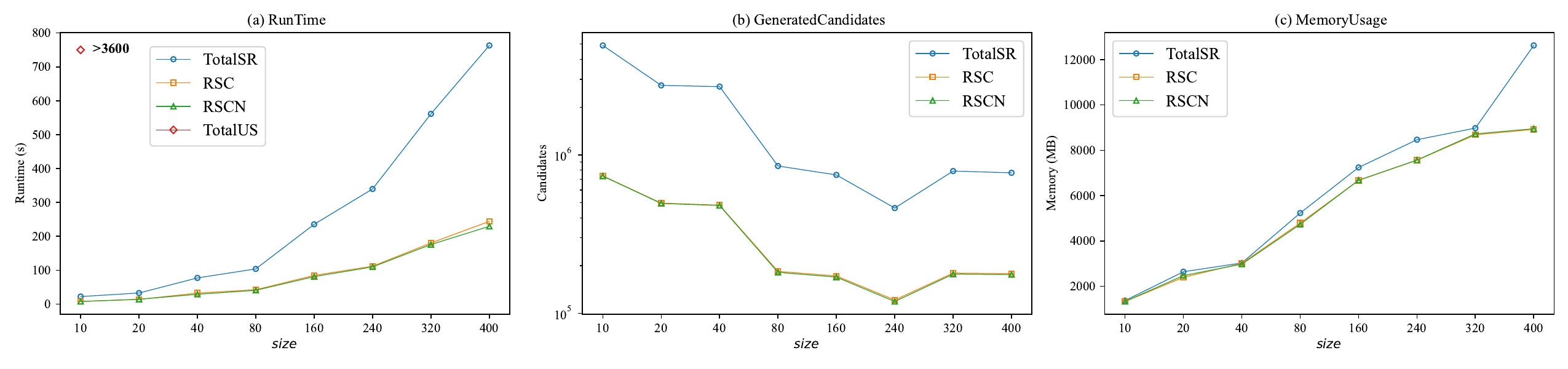}
    \caption{Scalability test ($\delta$ $=$ $1\%$ and \textit{minconf} $=$ 0.6)}  
    \label{sacTest}
\end{figure*}

\subsection{Memory analysis}

The memory consumption of each algorithm during execution is shown in Fig. \ref{SizeTest}. As illustrated, denser datasets and longer sequence lengths require more memory during execution. The memory usage across algorithms is generally similar, as the search depth required to mine all HUSRs is the same. However, the design of data structures and pruning strategies can slightly reduce memory usage. TotalSR reduces memory consumption by improving the upper bounds of TotalUS, thereby reducing the search space depth. As shown in Fig. \ref{TableTest}, RSC further improves memory usage because it adopts a single-item extension strategy, which avoids redundant item expansions and reduces the number of candidate items stored. In Fig. \ref{SizeTest}(b), RSC shows substantially lower memory usage than the other methods. This is likely due to the shorter average sequence lengths in the Kosarak10k dataset, which leads to a shallower search space. Under such conditions, RSC’s ability to generate all rules from a candidate sequence in a single step effectively reduces both the number of candidate sequences and the associated memory cost.

\subsection{Ablation experiments}

We designed three RSC variants to assess the effectiveness of the proposed strategies: RSCN, RSCP, and RSCR. RSCN disables Strategy \ref{seus} but retains all other pruning mechanisms. RSCP disables Strategy \ref{seps}. RSCR replaces the RRU upper bound with the RU upper bound. Since RRU is an improvement on RU, RSC can only demonstrate its high efficiency in sequences containing duplicate items. For this reason, the experiment for Fig. \ref{timeP} was conducted on the original dataset with duplicate items, while that for Fig. \ref{timeS} was performed on the dataset with additional preprocessing to remove duplicate items. Although the parameter settings are identical for both experiments, their performance metrics (e.g., time efficiency) exhibit notable differences.

Fig. \ref{timeP} shows the runtime of RSC and RSCR. As shown, the runtime of RSC is significantly lower than that of RSCR on BIBLE and Leviathan. This is because RSC uses a tighter utility upper bound, which allows it to prune the search space more effectively, avoid unnecessary search operations, and reduce the runtime. The dense structure and relatively long sequences in these datasets amplify the benefits of the tighter bound, as its computational overhead is low compared to the search space reduction it enables. The performance difference between the two algorithms on SIGN is small. This is because sequences in SIGN are generally long, resulting in a higher average remaining utility for each item. As a result, the estimated utility upper bounds for both algorithms increase, making it easier to exceed \textit{minutil} without pruning and thus reducing the distinction between the two upper bounds.

Fig. \ref{timeS} shows the runtime of RSC, RSCP, and RSCN, where RSCN does not use Strategy \ref{seus}, and RSCP does not adopt Strategy \ref{seps}. According to Fig. \ref{TimeTest}, RSCN consistently achieves the second-best runtime, as it employs the same rule generation method as RSC. However, since RSCN does not utilize Strategy \ref{seus},  it fails to eliminate a large number of irrelevant items that do not contribute to any rule generation. As a result, it exhibits slower performance than RSC. Furthermore, RSCP consistently exhibits the longest runtime. This is because it does not adopt Strategy \ref{seps}, a critical pruning strategy. Without this strategy, RSCP cannot predict in advance which extensions are impossible to form valid rules, thus requiring additional time for in-depth downward search. The runtime difference is small on SIGN, likely due to the extremely long sequence lengths in this dataset. Such long sequences inflate the SEU estimates and weaken the pruning effect of Strategy \ref{seus}. Additionally, Strategy \ref{seps} struggles to perform more effective pruning due to excessively high remaining utility, thereby reducing the performance gap.

\subsection{Case study}

The aforementioned experiments evaluated the performance of the RSC algorithm across various dimensions. To fully demonstrate the effectiveness of this algorithm, we analyzed the rules derived from the Kosarak10k dataset. In this dataset, each item represents a news category in the user clickstream, and utility is assumed to correspond to the advertising traffic revenue generated. With \textit{minutil} set to 22000 and \textit{minconf} set to 0.6, we can obtain 14 high-utility rules, such as $\{11, 218,6 \rightarrow 148\}$ and $\{11,27 \rightarrow 6\}$. Particularly, the item $6$ is involved in the formation of all 14 rules. Analysis of these rules enables us to obtain high-value records of user preferences, identify the types of news that users prefer to browse, and thus optimize recommendation configurations. When partially ordered rules are applied to recommendation tasks, the core objective tends to be utility maximization. However, since the internal order of antecedent and consequent is undefined, the recommended items may be irrelevant to the news categories that users are currently browsing. In contrast, totally ordered rules allow recommendations to be generated based on users’ real-time browsing records. For example, when a user browses news about a specific brand of car, totally ordered rules can generate an optimal recommendation sequence—such as first recommending other types of cars from the current brand and then gradually guiding the user to recommendations of cars from other brands. In contrast, partially ordered rules can only provide a set of recommendations, and the order in which these recommendations should be presented is undefined. Consequently, this may result in low interest from the current user.  Furthermore, the news category represented by the item $6$ is a focal point of user attention. It can serve as an excellent traffic-driving or advertising placement channel and thus should be featured on the website homepage with higher advertising rates applied.

\subsection{Scalability test}

Fig. \ref{sacTest} presents the performance of algorithms on larger datasets to test their robustness. The datasets are synthesized from the real datasets mentioned earlier and vary in size from 10,000 to 398,716. The confidence is set to 0.6, and the $\delta$ is set to $1\%$. Since the TotalUS algorithm could hardly complete any experiments, we did not conduct all its experiments. Specifically, its runtime on Syn10K has already exceeded 3600 seconds, which is far longer than that of other algorithms. This is because the selected low threshold, along with the long sequence lengths and large number of sequences in the dataset, has greatly amplified the deficiencies of its pruning strategy, leading to excessive computational overhead and failure to complete subsequent experiments within a reasonable time frame. In Fig. \ref{sacTest}(a), the running time of each algorithm increases with the dataset size, and the time required by TotalSR is significantly higher than the other algorithms, which is consistent with previous experiments. RSC and RSCN yield comparable experimental results because the datasets contain many items that appear repeatedly in multiple sequences, which greatly prevents them from being regarded as irrelevant by Strategy \ref{seus} and deleted early. Therefore, the difference between RSC and RSCN is small. In Fig. \ref{sacTest}(b), the number of candidate items is not linearly correlated with the dataset size. This is because the \textit{minutil} is determined by a relative ratio $\delta$, and the datasets differ in structure and distribution. As a result, the \textit{minutil} derived from the same $\delta$ value differs across datasets, leading to variations in the size and level of search space exploration. As a result, the count of candidate items does not exhibit a strictly positive correlation with dataset size. In Fig. \ref{sacTest}(c), the memory usage of all algorithms grows as the dataset size increases. RSC requires less memory compared to TotalSR, demonstrating its superior scalability. In summary, the RSC algorithm efficiently handles large-scale datasets and achieves improved performance compared to the other three algorithms.

\section{Conclusion}   \label{sec:conclusion}

In this paper, we identify the issue of redundant utility computation in existing HUSRM algorithms and propose the RSC algorithm to address it, based on the observation that different rules often share the same underlying item sequence. RSC employs a segmentation-based approach to generate all distinct rules from the candidate sequences, significantly reducing redundant utility computations. Moreover, RSC introduces a novel projected data structure—the utility-linked table—to accelerate candidate generation. It also proposes a tighter utility upper bound tailored for sequences containing repetitive items. Extensive evaluations using real and synthetic data indicate that RSC consistently outperforms advanced algorithms, including TotalSR. A valuable avenue for future work is to explore high-utility sequential rule mining under various types of constraints, such as closed constraints, gap constraints, or target constraints. Incorporating such constraints can help generate more actionable and valuable rules that better align with the needs of real-world applications. Furthermore, the proposed RSC algorithm has the potential to be extended in multiple ways. One promising line of work is to adapt it for more complex datasets; another is to redesign RSC into a parallel or distributed framework, which would markedly increase scalability, facilitating its use in real-time and large-scale data mining applications.

\ifCLASSOPTIONcompsoc
  \section*{Acknowledgments}
\else
  \section*{Acknowledgment}
\fi

This work was supported in part by the Natural Science Foundation of Guangdong Province (No. 2024A1515010242), National Natural Science Foundation of China (No. 62272196), and Shenzhen Research Council (No. GXWD202208111702530022).

\bibliographystyle{IEEEtran}
\bibliography{RSC.bib}

\end{document}